\documentclass[11pt,a4paper]{article}
\usepackage{amsmath}
\usepackage[T1]{fontenc}
\usepackage[utf8]{inputenc}
\usepackage{amsfonts}
\usepackage{amsthm}
\usepackage{amstext}
\usepackage{amssymb}
\usepackage{color}
\usepackage{bbold}
\usepackage{hyperref}
\usepackage{url}
\newcommand{\rr}{\mathbb{R}}
\newcommand{\qq}{\mathbb{Q}}

\newcommand{\sharpp}{\sharp \sf{P}}

\newcommand{\Ind}{\mathrm{Ind}}
\newcommand{\rk}{\mathrm{rank}}
\newcommand{\tr}{\mathrm{Tr}}

\newtheorem{theorem}{Theorem}
\newtheorem{example}{Example}

\newtheorem{lemma}[theorem]{Lemma}
\newtheorem*{lemma*}{Lemma}
\newtheorem{proposition}[theorem]{Proposition}
\newtheorem{corollary}[theorem]{Corollary}
\newtheorem{remark}[theorem]{Remark}

\newtheorem*{open*}{Open~question}


\sloppy

\begin{document}

\title{On the complexity of partial derivatives}

\author{Ignacio Garcia-Marco, Pascal Koiran, Timoth\'ee Pecatte, 
Stéphan Thomassé\\
LIP\thanks{UMR 5668 Ecole Normale Supérieure de  Lyon, CNRS, UCBL, INRIA.
Email:  
{\tt [Pascal.Koiran, Timothee.Pecatte, Stephan.Thomasse]@ens-lyon.fr,} 
{\tt iggarcia@ull.es}.
The authors are supported by ANR project
CompA (code ANR--13--BS02--0001--01).
}, 
Ecole Normale Sup\'erieure de Lyon, Universit\'e de Lyon.
}

\maketitle

\begin{abstract}
The method of partial derivatives is one of the most successful lower bound 
methods for arithmetic circuits. It uses as a complexity measure the dimension 
of the span of the partial derivatives of a polynomial.
In this paper, we consider this complexity measure as a computational problem:
for an input polynomial given as the sum of its nonzero monomials,
what is the complexity of computing the dimension of its space 
of partial derivatives? 

We show that this problem is $\sharpp$-hard 
and we ask whether it belongs to  $\sharpp$. 
We analyze the ``trace method'', recently used in combinatorics and in algebraic complexity to lower bound the rank of certain matrices. 
We show that this method provides a 
polynomial-time computable lower bound on the dimension of the span
of partial derivatives, and from this method we derive closed-form
lower bounds. We leave as an open problem the existence of an approximation 
algorithm with reasonable performance guarantees.
\end{abstract}

\section{Introduction}

Circuit lower bounds against a class of circuits $\cal C$ are often obtained 
by defining an appropriate complexity measure which is small
for small circuits of $\cal C$ but is high for some explicit `` hard function.''
For arithmetic circuits, one of the most successful complexity measures 
is based on partial derivatives.
Sums of powers of linear forms provide the simplest model where the
method of partial derivatives can be presented (see for instance Chapter~10 of 
the survey by Chen, Kayal and Wigderson~\cite{CKW11}).
In this  model, 
a homogeneous polynomial $f(x_1,\ldots,x_n)$ 
of degree $d$ is given by an expression of the form:
\begin{equation} \label{waring}
f(x_1,\ldots,x_n)=\sum_{i=1}^r l_i(x_1,\ldots,x_n)^d
\end{equation}
where the $l_i$'s are linear functions. The smallest possible $r$ 
is often called the Waring rank of $f$ in the algebra litterature.
One takes as complexity measure $\dim \partial^{=k} f$, 
where $\partial^{=k} f$ denotes the linear space of polynomials spanned 
by the partial derivatives of $f$ of order $k$.
For any $k \leq d$, 
the derivatives of order $k$ of a $d$-th power of a linear form 
$l(x_1,\ldots,x_n)$ are constant multiples of $l^{d-k}$.
Therefore, by linearity of derivatives we have for any $k \leq d$ 
the lower bound 
$r \geq \dim \partial^{=k} f$ on the Waring rank of $f$.

The method of partial derivatives was introduced in the complexity theory 
litterature by Nisan and Wigderson~\cite{NW96}, 
where lower bounds were given for more powerful models than~(\ref{waring}) 
such as e.g. depth 3 arithmetic circuits. In such a circuit, the $d$-th powers 
in~(\ref{waring}) are replaced by products of $d$ affine functions. 
We then have~\cite{NW96} the lower bound $r \geq (\dim \partial^* f)/2^d$, 
where $r$ denotes as in~(\ref{waring}) the fan-in of the circuit's output gate
and $\partial^* f$ denotes the space spanned by partial derivatives of 
all order. 
More recently, a number of new lower bound results were obtained using
a refinement of the method of partial derivatives. These new results are based
on ``shifted partial derivatives'' (see the continuously updated online survey
maintained by Saptharishi~\cite{Saptsurvey} 
for an extensive list of references), but we will 
stick to ``unshifted'' derivatives in this paper.

Partial derivatives can also be used for {\em upper bound} results: see in particular Theorem~5 in~\cite{kayal2012affine} for an algorithm 
that constructs a representation in the Waring model~(\ref{waring}) 
of a polynomial given by a black box. To learn more on the complexity of 
circuit reconstruction for various classes of arithmetic circuits one may 
consult Chapter~5 of the survey by Shpilka and Yehudayoff~\cite{SY10}.

\subsection*{Our contributions}

In this paper we consider the dimension of the set of partial derivatives
as a computational problem and provide the first results (that we are aware of) on its complexity. This is quite a natural problem since, as explained above,
the knowledge of this dimension for an input polynomial $f$ provides estimates
on the circuit size of $f$ for several classes of arithmetic circuits.
We assume that the input polynomial $f$ is given in
the sparse representation  
(also called ``expanded representation''), i.e., as the 
sum of its nonzero monomials. 
We show in Section~\ref{hardsection} that computing $\dim \partial^*f$ 
is hard for Valiant's~\cite{valiant79a} counting class $\sharpp$. 
This remains true even if $f$ is multilinear, 
homogeneous and has only 0/1 coefficients.
The precise complexity of this problem remains open, in particular 
we do not know whether computing $\dim \partial^*f$ is in $\sharpp$.

As an intermediate step toward our $\sharpp$-hardness result, 
we obtain a result of independent interest for a problem of topological
origin: computing the number of faces in an abstract simplicial complex.
Our $\sharpp$-hardness proof for this problem proceeds by reduction 
from counting 
the number of independent sets in a graph, 
a well-known $\sharpp$-complete problem~\cite{provan83}.
It is inspired by the  recent proof~\cite{roune13} that computing the Euler characteristic of abstract simplicial complexes is $\sharpp$-complete.

Since the $\sharpp$-hardness result rules out an efficient algorithm for
the exact computation of $\dim \partial^*f$, it is of interest to obtain
efficiently computable upper and lower bounds for this quantity and for
$\dim \partial^{=k} f$. Upper bounds are easily obtained from the linearity
of derivatives. In Section~\ref{upperlower} we give a lower bound that is 
based on the consideration of a single ``extremal'' monomial of $f$.
In particular, for a multilinear homogeneous polynomial of degree $d$ 
with $s$ monomials 
we have ${d \choose k} \leq \dim \partial^{=k} f \leq s{d \choose k}$ 
for every $k$.
In Section~\ref{tracesec} we provide lower bounds that take all monomials 
of $f$ into account. Depending on the choice of the input polynomial,
these lower bounds may be better or worse than the lower
bound of Section~\ref{upperlower}. The lower bounds of Section~\ref{tracesec}
are based on the ``trace method.'' This method was recently used in~\cite{kayal2014exponential,kayal15} to lower bound 
the dimension of {\em shifted} partial derivatives 
of a specific ``hard'' polynomial, the so-called Nisan-Wigderson polynomial.
In~\cite{kayal2014exponential} this method is attributed to Noga Alon~\cite{Alon09}. 

In a nutshell, the principle of the trace method is as follows. Suppose
that we want to lower bound the rank of a matrix $M$. In this paper, $M$ will be the matrix of partial derivatives of a polynomial $f(x_1,\ldots,x_n)$. From $M$, we construct the symmetric matrix $B=M^T.M$. We have $\rk(M) \geq \rk(B)$, 
with equality if the ranks are computed over the field of real numbers.
In the trace method, we replace $\rk(M)=\rk(B)$ by the ``proxy rank'' 
$\tr(B)^2/ \tr(B^2)$. This is legitimate due to the the inequality 
\begin{equation}
\rk(B) \geq \tr(B)^2/ \tr(B^2),
\end{equation}
which follows from the Cauchy-Schwarz inequality applied to the eigenvalues of $B$. It is often easier to lower bound the proxy rank than to lower bound the rank directly. 
In Section~\ref{tracesec} we will see that the proxy rank can be computed in
polynomial time. This is not self-evident because 
$B$ may be of size exponential in the number $n$ of variables of $f$.
By contrast, as explained above computing $\rk(B)$ over the field of real numbers is $\sharpp$-hard.

\subsection*{Organization of the paper}

In Section~\ref{upperlower} we set up the notation for the rest of the paper,
and give some elementary estimates. In particular, Theorem~\ref{lexico}
provides a lower bound that relies on the consideration of a single extremal
monomial of $f$. 
Section~\ref{tracesec} is devoted to the trace method.
We use this method to derive closed-form lower bounds on the dimension
of the space of partial derivatives, and compare them to the lower bound
from Theorem~\ref{lexico}. In Section~\ref{polytime} we show that the ``proxy rank'' $\tr(B)^2/\tr(B^2)$ is computable in polynomial time.
In Section~\ref{elementarypoly} we show that the trace method behaves very 
poorly on elementary symmetric polynomials: for certain settings of parameters, 
the matrix of partial derivatives has full rank but the trace method can only
show that its rank is larger than 1.
Finally, we show in Section~\ref{hardsection} that it is $\sharpp$-hard 
to compute $\dim \partial^* f$ and to compute the number of faces in an abstract simplical complex.

\subsection*{Open problems}

Here are three of the main problems that are left open by this work.
\begin{enumerate}
\item Give a nontrivial upper bound on the complexity of computing $\dim \partial^* f$ and $\dim \partial^{=k} f$. 
In particular, are these two problems in $\sharpp$?

\item Give an efficient algorithm that approximates $\dim \partial^* f$ or $\dim \partial^{=k} f$, and comes with a reasonable performance guarantee (or show
that such an algorithm does not exist).
The proxy rank $\tr(B)^2/\tr(B^2)$ is efficiently computable, but certainly
does not fit the bill due to its poor performance on symmetric polynomials.
For counting the number of independent sets in a graph (the starting point of our reductions), there is already a significant amount of work on approximation
algorithms~\cite{luby97,dyer00} and hardness of approximation~\cite{luby97,dyer02}.

\item We recalled at the beginning of the introduction that partial derivatives 
are useful as a complexity measure to prove lower bounds against several classes
of arithmetic circuits. We saw that computing this measure is hard, 
but is it hard to compute the Waring rank of a homogeneous polynomial $f$ 
given in expanded form, or to compute the size of the smallest (homogeneous) depth 3 circuit for $f$?
The former problem has been recently proved to be NP-hard already for polynomials of degree~3 \cite{Shitov16}.
\end{enumerate}

\section{Elementary bounds} \label{upperlower}

We use the notation $\partial_{\beta} f$ for partial derivatives of 
a polynomial $f(x_1,\ldots,x_n)$. Here $\beta$ is a $n$-tuple of integers,
and $\beta_i$ is the number of times that we differentiate $f$ with respect
to $x_i$. 
We denote by $\partial^{=k} f$ the linear space spanned 
by the partial derivatives of $f$ of order $k$, 
and by $\partial^* f$ the space spanned by partial derivatives of all order.
For $\alpha \in \{0,1\}^n$, we denote by $x^\alpha$ 
the multilinear  monomial $x_1^{\alpha_1}.\cdots.x_n^{\alpha_n}$.
More generally, if $\alpha$ is a $n$-tuple of integers,
$x^\alpha$ denotes the monomial 
$x_1^{\alpha_1}.\cdots.x_n^{\alpha_n}/(\alpha_1!\cdots\alpha_n!)$.
These monomials form a basis of the space $\rr[x_1,\ldots,x_n]$
of real polynomials in $n$ variables, which we refer to as the
``scaled monomial basis.''
Dividing by the constant $\alpha_1!\cdots\alpha_n!$ is convenient since 
differentiation takes the simple form: 
$\partial_{\beta} x^{\alpha} = x^{\alpha-\beta}$.
We agree that $x^{\alpha-\beta}=0$ if one of the components of $\alpha-\beta$ 
is negative.

For a monomial $f=x^{\alpha}$, 
$\dim \partial^* x^\alpha = \prod_{i=1}^n (\alpha_i+1)$.
One can compute $\dim \partial^{=k} x^{\alpha}$ by dynamic programming thanks to the
recurrence relation:
$$\dim \partial^{=k} x^{\alpha} = 
\sum_{j=0}^{\alpha_1} \dim \partial^{=k-j} (x_2^{\alpha_2}.\cdots.x_{n}^{\alpha_{n}}).$$
It takes altogether $O((\deg f)^2)$ additions 
to compute the $\deg(f)+1$ numbers $\dim \partial^{=k} f$ 
for $k=0,\ldots,\deg(f)$.
Equivalently, $\dim \partial^{=k} x^{\alpha}$ can be computed as 
the coefficient of $t^k$ in the polynomial 
$$(1+t+\ldots+t^{\alpha_1}).(1+t+\ldots+t^{\alpha_2}). \cdots . 
(1+t+\ldots+t^{\alpha_n}).$$

For a polynomial with more than one monomial, one can obtain simple upper bounds
thanks to the linearity of derivatives  since 
$\dim \partial^* (f+g) \leq \dim \partial^* f + \dim \partial^* g$
and $\dim \partial^{=k} (f+g) \leq \dim \partial^{=k} f + \dim \partial^{=k} g$.
Lower bounding the dimension of the space of partial derivatives is slightly 
less 
immediate. 
\begin{theorem} \label{lexico}
For any polynomial $f$ there is a monomial $m$ in $f$ such that 
$\dim \partial^{=k} f \geq \dim \partial^{=k} m$ for every $k$. 
In particular, if all monomials 
in $f$ contain at least $r$ variables 
then 
$\dim \partial^{=k} f \geq {r \choose k}$ for every $k$.
\end{theorem}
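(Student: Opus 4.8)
The plan is to pick the monomial $m = x^\alpha$ of $f$ that is largest with respect to a suitable term order, and to show that for each $k$ the order-$k$ derivatives of $m$ remain linearly independent inside $\partial^{=k} f$, by isolating their leading terms. Concretely, fix the lexicographic order on monomials (say $x_1 > x_2 > \cdots > x_n$, comparing exponent vectors left to right), and let $x^\alpha$ be the lex-largest monomial appearing in $f$. Write $f = c\, x^\alpha + (\text{lex-smaller monomials})$ with $c \neq 0$.

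The key observation is that differentiation is ``order-preserving'' in the following sense: if $x^\gamma <_{\mathrm{lex}} x^\alpha$ and $\beta$ is any multi-index with $\gamma - \beta \geq 0$ and $\alpha - \beta \geq 0$, then $x^{\gamma - \beta} <_{\mathrm{lex}} x^{\alpha - \beta}$; and if $\alpha - \beta$ has a negative component then $\partial_\beta x^\alpha = 0$ while the other term may survive, but crucially a derivative $\partial_\beta$ with $\beta \le \alpha$ applied to $f$ has leading monomial exactly $x^{\alpha - \beta}$ (with coefficient $c$), since every other monomial of $f$ is lex-smaller and hence either dies or produces something lex-below $x^{\alpha-\beta}$. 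So the $\binom{d_\alpha}{k}$-many (where I should be careful: the number of order-$k$ derivatives of $m$ that are nonzero is $\dim \partial^{=k} m$) derivatives $\partial_\beta f$ for $|\beta| = k$, $\beta \le \alpha$, have pairwise distinct leading monomials, namely the distinct monomials $x^{\alpha - \beta}$, which are exactly a basis of $\partial^{=k} m$. Distinct leading monomials force linear independence, so $\dim \partial^{=k} f \ge \dim \partial^{=k} m = \dim \partial^{=k} x^\alpha$.

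The ``in particular'' clause then follows immediately: if every monomial of $f$ involves at least $r$ variables, then in particular $m = x^\alpha$ does, so at least $r$ of the $\alpha_i$ are positive; using the generating-function description $\dim \partial^{=k} x^\alpha = [t^k]\prod_{i=1}^n (1 + t + \cdots + t^{\alpha_i})$ recalled above, each positive $\alpha_i$ contributes a factor with a $t$ term, so the product is coefficient-wise at least $(1+t)^r$, giving $\dim \partial^{=k} x^\alpha \ge \binom{r}{k}$.

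The main thing to get right — the only real obstacle — is the bookkeeping in the first step: one must verify carefully that when $\beta \le \alpha$ and $|\beta| = k$, the monomial $x^{\alpha - \beta}$ genuinely dominates every monomial $x^{\gamma - \beta}$ arising from a lex-smaller $x^\gamma$ in $f$ (treating the case where $x^\gamma$ is killed by $\partial_\beta$ separately as trivial). This is a routine but slightly delicate check that lex order on exponent vectors is preserved under componentwise subtraction of a fixed vector, restricted to the cone where the result stays nonnegative; once that is in hand, the triangular structure of the leading terms gives the independence for free. Everything else is either the generating-function identity already stated or elementary coefficientwise comparison of polynomials with nonnegative coefficients.
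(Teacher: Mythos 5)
Your proof is correct and is essentially the paper's own argument: you pick an extremal monomial $x^\alpha$ for a term order compatible with addition and observe that the derivatives $\partial_\beta f$ with $\beta\le\alpha$, $|\beta|=k$ have pairwise distinct leading monomials $x^{\alpha-\beta}$, which is exactly the row-echelon/triangularity argument in the paper (the paper takes the lex-smallest monomial rather than the largest, but as its Remark on the Newton polytope notes, any extremal choice works). The key point you flag --- that the order is preserved under subtracting a fixed $\beta$ whenever the result stays nonnegative --- is precisely the inequality $\alpha-\beta<\alpha'-\beta'$ used in the paper, so there is nothing further to fix.
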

\begin{proof}
The second claim clearly follows from the first claim. 
Let $n$ be the number of variables in $f$. In order to find 
the monomial $m$, we fix a total order $\leq$ on $n$-tuple of integers
which is compatible with addition, for instance the lexicographic order
(different orders may lead to different $m$'s).
We will use $\leq$ to order monomials as well as tuples $\beta$ in partial 
derivatives such as $\partial_{\beta} f$.
We will also use the partial order $\subseteq$ defined by:
$\beta \subseteq \alpha$ iff $\beta_i \leq \alpha_i$ for all $i=1,\ldots,n$.
Let $m=x^{\alpha}$ be the smallest monomial for $\leq$ with a nonzero coefficient in $f$.

To complete the proof of the theorem, we just need 
to  show that the partial derivatives $\partial_{\beta} f$ 
where $\beta \subseteq \alpha$ are linearly independent.
The dimension of the space spanned by these partial derivatives is equal 
to the rank of a certain matrix $M$. The rows of $M$ are indexed by
the $n$-tuples $\beta$ such that $\beta \subseteq \alpha$, and row $\beta$
contains the coordinates of $\partial_{\beta} f$ 
in the scaled monomial basis $(x^{\gamma})$. 
If $f=\sum_{\gamma} a_{\gamma} x^{\gamma}$, we therefore have 
$M_{\beta, \gamma-\beta}=a_{\gamma}$.
Let us order the rows and columns of $M$ according to $\leq$.
We have seen that $M$ contains a nonzero coefficient in row $\beta$ 
and column $\alpha-\beta$. 
This coefficient is strictly to the left of any nonzero coefficient in any row
above $\beta$. Indeed, we have $\alpha-\beta < \alpha'-\beta'$ 
if $\alpha' \geq \alpha$ and $\beta' < \beta$. 
Our matrix is therefore in row echelon form, and does not contain any identically zero row. It is therefore of full row rank.
\end{proof}
\begin{remark} \label{newton}
Recall that the Newton polytope of $f$ is the convex hull of the $n$-tuples of
exponents of monomials of $f$. By changing the order $\leq$ in the proof of
Theorem~\ref{lexico} we can take for $m$ any vertex of the Newton polytope.
\end{remark}
Theorem~\ref{lexico} lower bounds $\dim \partial^{=k} f$ by the same 
dimension computed for a suitable monomial of $f$.
 This is of course tight if $f$ has
a single monomial. We note that adding more monomials does not necessarily increase
$\dim \partial^{=k} f$. For instance, the polynomial 
$f=\prod_{i=1}^d\sum_{j=1}^q x_{ij}$
has $q^d$ monomials but $\dim \partial^{=k} f$ remains equal to ${d \choose k}$ 
for any $q$.
\begin{corollary}
For a multilinear homogeneous polynomial of degree $d$ with $s$ monomials 
we have ${d \choose k} \leq \dim \partial^{=k} f \leq s{d \choose k}$ 
for every $k$.
\end{corollary}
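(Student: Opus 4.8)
The corollary follows almost immediately from the two bounds already at hand, so the proof is short; I will simply assemble the pieces.

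\medskip

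The plan is to combine Theorem~\ref{lexico} with the subadditivity of $\dim\partial^{=k}$ under sums. For the lower bound, I invoke the second claim of Theorem~\ref{lexico}: since $f$ is homogeneous of degree $d$ and multilinear, every monomial of $f$ is a squarefree monomial of degree exactly $d$, hence contains exactly $d$ variables. Thus $r=d$ in the theorem, and we get $\dim\partial^{=k} f \geq \binom{d}{k}$ for every $k$. For the upper bound, write $f=\sum_{j=1}^s m_j$ as its (at most $s$) nonzero monomials. By iterating the inequality $\dim\partial^{=k}(g+h)\leq \dim\partial^{=k} g + \dim\partial^{=k} h$ we obtain $\dim\partial^{=k} f \leq \sum_{j=1}^s \dim\partial^{=k} m_j$. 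Each $m_j$ is a multilinear monomial of degree $d$, i.e.\ of the form $x^{\alpha}$ with $\alpha\in\{0,1\}^n$ and $\sum_i \alpha_i = d$; choosing $k$ of the $d$ variables to differentiate shows $\dim\partial^{=k} m_j = \binom{d}{k}$ (equivalently, this is the coefficient of $t^k$ in $(1+t)^d$, using the generating-function formula recalled in the text). Summing over the $s$ monomials gives $\dim\partial^{=k} f \leq s\binom{d}{k}$.

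\medskip

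There is essentially no obstacle here: the only point requiring a word of care is that "degree $d$" together with "homogeneous" and "multilinear" forces each monomial to use exactly $d$ distinct variables, so that both the per-monomial lower bound $\binom{d}{k}$ (from Theorem~\ref{lexico}) and the per-monomial value $\binom{d}{k}$ (for the upper bound) are the same quantity $\binom{d}{k}$. I would present this in two or three sentences and be done.
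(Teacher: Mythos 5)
Your proof is correct and follows exactly the same route as the paper: the lower bound from Theorem~\ref{lexico} (with $r=d$ since every monomial of a multilinear homogeneous degree-$d$ polynomial uses exactly $d$ variables), and the upper bound from subadditivity of $\dim\partial^{=k}$ over the $s$ monomials, each contributing $\binom{d}{k}$. The paper states this in one sentence; you have merely spelled out the details.
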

\begin{proof}
The upper bound follows from the linearity of derivatives, and the lower bound 
from Theorem~\ref{lexico}.
\end{proof}

\section{The trace method} \label{tracesec}

The lower bound on $\dim \partial^{=k} f$ in Theorem~\ref{lexico} takes  a single monomial of $f$ into account. 
In this section we give a more ``global'' 
result which takes all monomials into account.
We will in fact lower bound the dimension of a subspace of $\partial^{=k} f$,
spanned by partial derivatives of the form $\partial_I f$ 
where $I \in \{0,1\}^n$. In other words, we will differentiate at most once 
with respect  to any variable.\footnote{One could lift this restriction and 
derive similar results for the ``full'' matrix of $k$-th order derivatives, i.e., for the case where several differentiations with respect to the same variable are allowed. 
This would have the effect of replacing the binomial coefficients ${\sup(P) \choose k}$ in the lower bounds of the present section by $\dim \partial^{=k}P$. Here $P$ denotes a monomial of $f$; we have explained at the beginning of Section~\ref{upperlower} 
how to compute $\dim \partial^{=k}P$.

We will stick here to a single differentiation for the sake of notational simplicity.} 
We can of course view $I$ as a subset of $[n]$ 
rather than as a vector in $\{0,1\}^n$.

We form a matrix $M$ of partial derivatives as in the proof 
of Theorem~\ref{lexico}. The rows of $M$ are indexed by subsets 
of $[n]$ of size $k$, and row $I$ contains the expansion 
of $\partial_I f$ in the basis  $(x^J)$.
If $f = \sum_j a_J x^J$, we have seen in Section~\ref{upperlower} 
that $M_{I,J}=a_{I+J}$.
In order to lower bound the rank of $M$, we will apply the following lemma 
to the symmetric matrix $B=M^T.M$.
\begin{lemma} \label{alon}
For any real symmetric 
matrix  $B \neq 0$ we have
$$\rk(B) \geq \frac{(\tr B)^2}{\tr(B^2)}.$$
\end{lemma}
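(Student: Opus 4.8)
The plan is to diagonalize $B$ and apply the Cauchy--Schwarz inequality to its eigenvalues. Since $B$ is a real symmetric matrix, the spectral theorem gives an orthonormal eigenbasis, so $B$ has real eigenvalues $\lambda_1,\ldots,\lambda_N$ (with $N$ the size of $B$), some of which may be zero. The two traces appearing in the statement are then $\tr B = \sum_{i=1}^N \lambda_i$ and $\tr(B^2) = \sum_{i=1}^N \lambda_i^2$, the latter because squaring $B$ squares each eigenvalue. The rank of $B$ equals the number $r$ of nonzero eigenvalues.

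First I would restrict attention to the nonzero eigenvalues: reindex so that $\lambda_1,\ldots,\lambda_r$ are exactly the nonzero ones, so that $\tr B = \sum_{i=1}^r \lambda_i$ and $\tr(B^2) = \sum_{i=1}^r \lambda_i^2 > 0$ (this sum is strictly positive because $B \neq 0$ forces $r \geq 1$; this is where the hypothesis $B \neq 0$ is used, to avoid dividing by zero). Then I would apply the Cauchy--Schwarz inequality to the two vectors $(\lambda_1,\ldots,\lambda_r)$ and $(1,1,\ldots,1) \in \rr^r$, obtaining
$$\left(\sum_{i=1}^r \lambda_i \cdot 1\right)^2 \leq \left(\sum_{i=1}^r \lambda_i^2\right)\left(\sum_{i=1}^r 1^2\right) = r \cdot \tr(B^2).$$
Since the left-hand side is $(\tr B)^2$ and $r = \rk(B)$, rearranging yields $\rk(B) \geq (\tr B)^2 / \tr(B^2)$, as desired.

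There is no real obstacle here; the only point requiring a moment's care is the bookkeeping of which eigenvalues to include in the Cauchy--Schwarz step. One must apply Cauchy--Schwarz over the $r$ nonzero eigenvalues rather than over all $N$ of them: including the zero eigenvalues would replace $r$ by $N$ on the right and give the weaker (and useless) bound $\rk(B) \geq N \cdot (\tr B)^2/\tr(B^2)$ is wrong — rather it would give $(\tr B)^2 \le N\,\tr(B^2)$, which does not mention the rank at all. Dropping the zero eigenvalues changes neither $\tr B$ nor $\tr(B^2)$, so this restriction is free, and it is exactly what makes the count $\sum_{i=1}^r 1 = r = \rk(B)$ produce the rank. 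The hypothesis $B \neq 0$ guarantees $r \geq 1$ so that $\tr(B^2) \neq 0$ and the quotient is well-defined.
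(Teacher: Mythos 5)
Your proof is correct and follows exactly the route the paper indicates: the paper dispatches this lemma in one line by ``applying the Cauchy--Schwarz inequality to the vector of nonzero eigenvalues of $B$,'' which is precisely your argument, spelled out in full. The care you take to restrict to the nonzero eigenvalues (so that the count gives $\rk(B)$ rather than the matrix size) and to use $B \neq 0$ to keep the denominator positive is exactly the right bookkeeping.
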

Lemma~\ref{alon} is easily obtained by applying the Cauchy-Schwarz inequality 
to the vector of nonzero eingenvalues of $B$.
Note that $B=M^T.M$ has same rank as $M$ since we have: 
$x^TBx=0 \Leftrightarrow Mx=0$ for any vector $x$.

We first consider the case of polynomials with 0/1 coefficients, for which
 we have the following lower bound.
\begin{theorem} \label{nonmult}
For $f$ a real polynomial with 0/1 coefficients we have
 \begin{equation} \label{01lb}
\dim \partial^{=k} f \geq \frac{\sum_{P \in {\cal M}} {\sup(P) \choose k}}{|{\cal M}|^2}
\end{equation}
where $\cal M$ denotes the set of monomials occuring in $f$, and $\sup(P)$ 
the number of distinct variables occuring in monomial $P$.
\end{theorem}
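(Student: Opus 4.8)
The plan is to run the trace method of Lemma~\ref{alon} on the matrix $B = M^T M$, where $M$ is the matrix of partial derivatives $\partial_I f$ for $I \in \{0,1\}^n$ with $|I| = k$, introduced just before the statement. Recall that $M_{I,J} = a_{I+J}$, that $\rk(B) = \rk(M)$, and that $\rk(M)$ is the dimension of the span of the $\partial_I f$, which is a subspace of $\partial^{=k} f$; so it suffices to prove $\rk(B) \geq \bigl(\sum_{P \in {\cal M}} \binom{\sup(P)}{k}\bigr) / |{\cal M}|^2$. In view of Lemma~\ref{alon}, this follows from the two estimates
\[
\tr(B) = \sum_{P \in {\cal M}} \binom{\sup(P)}{k}, \qquad \tr(B^2) \leq |{\cal M}|^2 \, \tr(B).
\]
If $\tr(B) = 0$ the asserted bound is $0$ and there is nothing to prove, so we may assume $B \neq 0$ and apply the lemma.

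For the first estimate, write $\tr(B) = \sum_J \sum_{|I| = k} a_{I+J}^2$; since the coefficients are $0/1$, this is the number of pairs $(I, \gamma)$ with $I \in \{0,1\}^n$, $|I| = k$, $x^\gamma \in {\cal M}$, and $I \leq \gamma$ componentwise. For a fixed monomial $P = x^\gamma$, the condition $I \leq \gamma$ for a $0/1$ vector $I$ says exactly that $I$ is supported on the set of variables occurring in $P$, so there are $\binom{\sup(P)}{k}$ admissible $I$; summing over $P$ gives the claim.

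For the second estimate, expand $\tr(B^2) = \sum_{J, J'} B_{J,J'}^2 = \sum_{J, J'} \bigl(\sum_{|I| = k} a_{I+J} a_{I+J'}\bigr)^2$; using the $0/1$ hypothesis again, this counts the quadruples $(I, I', J, J')$ with $I, I' \in \{0,1\}^n$ of size $k$ such that all four of $x^{I+J}$, $x^{I+J'}$, $x^{I'+J}$, $x^{I'+J'}$ lie in ${\cal M}$. I would bound this count by exhibiting an injection sending such a quadruple to the data $\bigl((P,I), Q, R\bigr)$, where $P = x^{I+J}$, $Q = x^{I'+J}$, $R = x^{I'+J'}$ are monomials of $f$: the map is injective because $(P, I)$ determines $J = \gamma_P - I$, then $Q$ determines $I' = \gamma_Q - J$, then $R$ determines $J' = \gamma_R - I'$. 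The pairs $(P, I)$ occurring in the image are exactly those with $P \in {\cal M}$ and $I$ a size-$k$ subset of the variables of $P$, of which there are $\tr(B)$ by the first estimate, while $Q$ and $R$ each range over ${\cal M}$; hence $\tr(B^2) \leq \tr(B) \, |{\cal M}|^2$.

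Combining, Lemma~\ref{alon} gives $\dim \partial^{=k} f \geq \rk(B) \geq (\tr B)^2 / \tr(B^2) \geq \tr(B) / |{\cal M}|^2$, which is the desired bound. The only delicate point is the injectivity argument for $\tr(B^2)$: one must verify that, on the image, $J$, $I'$, $J'$ are indeed recovered in turn from $(P,I)$, $Q$, $R$, and — more importantly — that the leftover condition $x^{I+J'} \in {\cal M}$ is treated as a constraint rather than an extra degree of freedom, so that the crude bound $\tr(B)\,|{\cal M}|^2$, with no spurious binomial factors, suffices. I expect this bookkeeping to be the main obstacle; everything else is a direct unwinding of the definitions together with the $0/1$ hypothesis on the coefficients.
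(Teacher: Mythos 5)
Your proposal is correct and follows essentially the same route as the paper: the trace method of Lemma~\ref{alon} applied to $B=M^T M$, with $\tr(B)$ computed as the number of nonzero entries of $M$ and $\tr(B^2)$ bounded by injecting the valid quadruples into triples consisting of a (monomial, row index) pair together with two further monomials of ${\cal M}$, giving $\tr(B^2)\leq \tr(B)\,|{\cal M}|^2$. The only difference from the paper's Lemmas~\ref{trbnonmult} and~\ref{trb2nonmult} is an immaterial relabeling of which three corners of the quadruple are used to parametrize it.
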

The right-hand side of~(\ref{01lb}) is sandwiched between ${\mathrm{supmin} \choose k}/|{\cal M}|$ and ${\mathrm{supmax} \choose k}/|{\cal M}|$, where $\mathrm{supmin}$ and $\mathrm{supmax}$ denote respectively the minimum and maximum number of variables occuring in a monomial of $f$.
Theorem~\ref{lexico} provides a better lower bound (by a factor of $|{\cal M}|$)
when all the monomials of $f$ have supports of same size.
Theorem~\ref{nonmult} becomes interesting when all but a few monomials in 
$f$ have large support. Indeed, the presence of a few monomials of small support can ruin the lower bound of Theorem~\ref{lexico}.
\begin{example}
Let $f(x_1,\ldots,x_n)=x_1.x_2.\cdots.x_n+\sum_{i=1}^n x_i^n$.
The Newton polytope of $f$ is an $n$-simplex whose vertices correspond to the monomials $x_1^n,\ldots,x_n^n$. 
The point corresponding to the monomial $x_1.x_2.\cdots.x_n$ is the barycenter
of this simplex, and in particular it is not a vertex of the Newton polytope.
As a result, by Remark~\ref{newton} the lower bound method of Theorem~\ref{lexico} can only  show that $\dim \partial^{=k} f \geq 1$.
Theorem~\ref{nonmult} shows the better lower bound:
$$ \dim \partial^{=k} f \geq \frac{{n \choose k}+n}{(n+1)^2}.$$

It is not hard to check by a direct calculation that for this example, 
the correct value of $\dim \partial^{=k} f$ is:
\begin{itemize}
\item 1 for $k  \in \{0,n\};$
\item $n$ for $k  \in \{1,n-1\};$
\item ${n \choose k} + n$ for $2 \leq k \leq n-2$.
\end{itemize}
\end{example}

Let us now proceeed with the proof of Theorem~\ref{nonmult}.
In view of Lemma~\ref{alon}, we need a lower bound on $\tr(B)$ and an upper bound on $\tr(B^2)$.
\begin{lemma} \label{trbnonmult}
$\tr(B)=\sum_{P \in {\cal M}} {\sup(P) \choose k}.$
\end{lemma}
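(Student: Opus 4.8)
The plan is to compute the diagonal entries of $B = M^T.M$ directly from the definition $M_{I,J} = a_{I+J}$ and sum them. Recall that the rows of $M$ are indexed by $k$-subsets $I \subseteq [n]$ and the columns by scaled monomials $x^J$; since $f$ has $0/1$ coefficients, each $a_{I+J}$ is either $0$ or $1$, and $a_{I+J}=1$ precisely when $x^{I+J}$ is a monomial of $f$ (equivalently $I+J \in {\cal M}$, identifying monomials with their exponent vectors).

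First I would write $B_{I,I} = \sum_J M_{I,J}^2 = \sum_J a_{I+J}^2 = \sum_J a_{I+J}$, where the last step uses that $a_{I+J} \in \{0,1\}$. This sum counts the number of columns $J$ such that $I+J$ is a monomial of $f$. Equivalently, it counts the monomials $P \in {\cal M}$ that are ``reachable from'' $I$, i.e., such that $P - I$ has all nonnegative entries; writing $P$ for the exponent vector, this is the condition $I \subseteq \mathrm{supp}(P)$ together with $I$ having entries at most those of $P$ — but since $I \in \{0,1\}^n$, the condition $I \le P$ componentwise is exactly $I \subseteq \mathrm{supp}(P)$, the set of variables occurring in $P$. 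So $B_{I,I}$ equals the number of monomials $P \in {\cal M}$ whose support contains $I$.

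Then I would sum over all $k$-subsets $I$: $\tr(B) = \sum_{|I|=k} B_{I,I} = \sum_{|I|=k} \#\{P \in {\cal M} : I \subseteq \sup(P)\}$. Swapping the order of summation, this is $\sum_{P \in {\cal M}} \#\{I : |I|=k,\ I \subseteq \sup(P)\} = \sum_{P \in {\cal M}} \binom{\sup(P)}{k}$, since there are exactly $\binom{\sup(P)}{k}$ subsets of size $k$ inside a support set of size $\sup(P)$. (Here $\sup(P)$ denotes the number of variables in $P$, so $\binom{\sup(P)}{k}$ is a genuine binomial coefficient, equal to $0$ when $k > \sup(P)$.) This yields the claimed formula.

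There is no real obstacle here — the only subtle point is the identification of the condition ``$x^{I+J}$ is a monomial of $f$'' with ``$I \subseteq \sup(P)$ for the corresponding $P$,'' which rests on $I$ being a $0/1$ vector so that componentwise domination reduces to support containment, and on the fact that for each valid $P$ there is a unique column $J = P - I$. I would also note explicitly that the $0/1$ coefficient hypothesis is what lets us replace $a_{I+J}^2$ by $a_{I+J}$; this is the one place the hypothesis of Theorem~\ref{nonmult} enters the trace computation.
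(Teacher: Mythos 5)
Your proof is correct and follows essentially the same route as the paper's: both reduce $\tr(B)$ to $\sum_{I,J} M_{I,J}^2$, i.e.\ the number of nonzero entries of $M$, and then count these entries monomial by monomial, using that the pairs $(I,\,P-I)$ are distinct across choices of $P\in{\cal M}$ and $I\subseteq\mathrm{supp}(P)$. (One cosmetic remark: since $B=M^T.M$ is indexed by column indices of $M$, the diagonal entries you compute as $\sum_J M_{I,J}^2$ with $I$ a row index are really those of $M.M^T$, but the two products have equal trace, so nothing changes.)
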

\begin{proof}
By definition of $B$, $\tr(B)=\sum_{J} (M^T.M)_{J,J}=\sum_{I,J}M_{I,J}^2$. Since $M_{I,J} \in \{0,1\}$, this is nothing but the number of nonzero entries in $M$. 
Monomial $P$ contributes ${\sup(P) \choose k}$ such entries and they are all
distinct.
\end{proof}

\begin{lemma} \label{trb2nonmult}
$\tr(B^2) \leq |{\cal M}|^2\sum_{P \in {\cal M}} {\sup(P) \choose k}.$
\end{lemma}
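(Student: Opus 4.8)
The plan is to expand $\tr(B^2)$ as a sum of nonnegative $0/1$ terms indexed by quadruples, and then to bound the number of such quadruples by an injective encoding. First I would use that $B = M^T.M$ is symmetric to write $\tr(B^2) = \sum_{J,J'} B_{J,J'}^2$, the sum running over pairs of column indices of $M$ (that is, $n$-tuples of nonnegative integers; note that only finitely many columns of $M$ are nonzero). Since $B_{J,J'} = \sum_I M_{I,J}\, M_{I,J'} = \sum_I a_{I+J}\, a_{I+J'}$, where $I$ ranges over size-$k$ subsets of $[n]$, squaring gives
\[
\tr(B^2) = \sum_{I,\, I',\, J,\, J'} a_{I+J}\, a_{I+J'}\, a_{I'+J}\, a_{I'+J'},
\]
with $I, I'$ ranging over $k$-subsets of $[n]$ and $J, J'$ over $n$-tuples of nonnegative integers. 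Because $f$ has $0/1$ coefficients, each summand is $0$ or $1$, so $\tr(B^2)$ is exactly the number of quadruples $(I, I', J, J')$ such that all four of $I+J$, $I+J'$, $I'+J$, $I'+J'$ are monomials of $f$.

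Next I would bound this count. To a valid quadruple I associate the datum $(P_1, I, P_2, P_3)$, where $P_1 := I+J$, $P_2 := I+J'$ and $P_3 := I'+J$ — all three lying in $\cal M$. This map is injective: from $(P_1, I, P_2, P_3)$ one recovers $J = P_1 - I$, then $J' = P_2 - I$, then $I' = P_3 - J$, and finally $P_4 = I'+J'$ is forced. Its image is contained in the set of tuples with $P_1, P_2, P_3 \in \cal M$ and $I$ a $k$-element subset of the support of $P_1$: indeed $I \in \{0,1\}^n$ has size $k$ and satisfies $I \le P_1$ coordinatewise. For a fixed $P_1$ there are therefore ${\sup(P_1) \choose k}$ choices for $I$ and at most $|{\cal M}|$ choices for each of $P_2$ and $P_3$, so the number of quadruples is at most $\sum_{P_1 \in {\cal M}} {\sup(P_1) \choose k}\, |{\cal M}|^2$, which is the claimed inequality.

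The only mildly delicate point is the bookkeeping in this injection: deciding which of the four monomials plays the role of the ``anchor'' $P_1$ whose support bounds the number of choices for $I$, and then checking that the three remaining pieces of data $(I, P_2, P_3)$ genuinely pin down the whole quadruple. Everything else — the expansion of $\tr(B^2)$ and the count of $k$-subsets of a support — is routine, and I do not anticipate a genuinely hard step.
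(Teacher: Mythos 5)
Your proposal is correct and follows essentially the same route as the paper: expand $\tr(B^2)$ as a sum over quadruples of $0/1$ terms, observe it counts valid quadruples, and bound the count via the injective re-encoding by $(P,I,Q,R)$ with $P,Q,R\in{\cal M}$ and $I$ a $k$-subset of the support of $P$ (your $(P_1,I,P_2,P_3)$ is the paper's $(P,Q,R,I)$ up to renaming). No gaps.
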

\begin{proof}
Since $B$ is symmetric, $\tr(B^2)=\sum_{K,L} (B_{K,L})^2$. 
By definition of $B$, $B_{K,L}=\sum_I M_{I,K}.M_{I,L}$.
Therefore $$\tr(B^2)=\sum_{I,J,K,L} M_{I,K}.M_{I,L}.M_{J,K}.M_{J,L}.$$
In this formula, $I,J$ range over row indices (subsets of $[n]$ of size $k$), and $K,L$ range over column indices. 
Hence $\tr(B^2)$ is equal to the number of quadruples $(I,J,K,L)$ such that all 4 entries $M_{I,K}$, $M_{I,L}$, $M_{J,K}$, $M_{J,L}$ are nonzero. Let us say that a quadruple is {\em valid} if this condition is satisfied. 
A quadruple is valid if and only if the 4 coefficients $a_{I+K}$, $a_{I+L}$, 
$a_{J+K}$, $a_{J+L}$ are nonzero. This implies that there are at most 
$|{\cal M}|^2\sum_{P \in {\cal M}} {\sup(P) \choose k}$
valid quadruples. Let us indeed denote by $P,Q,R$ the 3 $n$-tuples
 ${I + K}$, ${I + L}$, ${J + K}$.
For every fixed $P$ we have at most 
${\sup(P) \choose k}$ choices for $I$ since $I$ is contained in the support of $P$, and at most ${|\cal M|}^2$ choices for the pair $(Q,R)$. 
The result follows since the quadruple $(I,J,K,L)$ 
is completely determined by the 
choices of $P,Q,R$ and $I$: we must have $K=P - I$,
$L = Q -I$, $J=R-K$.
\end{proof}
Theorem~\ref{nonmult} follows immediately from Lemmas~\ref{alon} to~\ref{trb2nonmult}.

\subsection{Extension to real coefficients}

In this section we generalize Theorem~\ref{nonmult} to polynomials 
with real coefficients. Theorem~\ref{realth} could itself be generalized 
to polynomials with complex coefficients by working with a Hermitian matrix 
in Lemma~\ref{alon} rather than with a symmetric matrix.
\begin{theorem} \label{realth}
For any real polynomial $f$ we have
 \begin{equation} \label{reallb}
\dim \partial^{=k} f \geq \frac{\sum_{P \in {\cal M}} {\sup(P) \choose k}a_P^2}{|{\cal M}|\sum_{P \in {\cal M}} a_P^2}
\end{equation}
where $\cal M$ denotes the set of monomials occuring in $f$.
\end{theorem}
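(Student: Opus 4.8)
The plan is to mimic the proof of Theorem~\ref{nonmult}, replacing the entry-counting arguments by weighted sums. As before I set $B=M^T.M$ where $M_{I,J}=a_{I+J}$, so by Lemma~\ref{alon} it suffices to produce a lower bound on $(\tr B)^2$ and an upper bound on $\tr(B^2)$; the claimed bound~(\ref{reallb}) is exactly $(\tr B)^2/\tr(B^2)$ once the two estimates below are in place. First I would compute $\tr(B)=\sum_{I,J}M_{I,J}^2=\sum_{I,J}a_{I+J}^2$. Grouping the pairs $(I,J)$ by their sum $P=I+J$, a monomial $P\in{\cal M}$ contributes $a_P^2$ for each of the ${\sup(P)\choose k}$ ways of writing $P=I+J$ with $|I|=k$ and $I$ supported on $P$ (and $J=P-I$). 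Hence $\tr(B)=\sum_{P\in{\cal M}}{\sup(P)\choose k}a_P^2$, the numerator of~(\ref{reallb}).

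Next I would bound $\tr(B^2)=\sum_{K,L}B_{K,L}^2=\sum_{I,J,K,L}M_{I,K}M_{I,L}M_{J,K}M_{J,L}=\sum a_{I+K}a_{I+L}a_{J+K}a_{J+L}$, the sum ranging over quadruples $(I,J,K,L)$ of index pairs. As in Lemma~\ref{trb2nonmult}, put $P=I+K$, $Q=I+L$, $R=J+K$; the quadruple is recovered from $(P,Q,R,I)$ via $K=P-I$, $L=Q-I$, $J=R-K$. A term is nonzero only if $P,Q,R\in{\cal M}$ and $I$ is supported on $P$. To turn this into the right-hand side of~(\ref{reallb}) I would apply the AM--GM (or Cauchy--Schwarz) inequality $|a_{I+K}a_{I+L}a_{J+K}a_{J+L}|\le\tfrac12(a_{I+K}^2a_{J+L}^2+a_{I+L}^2a_{J+K}^2)$, or more directly bound each fourth-degree monomial in the $a$'s by a symmetric average of the form $\tfrac14(a_P^2a_{Q'}^2+\cdots)$ where the fourth index $S=J+L$ also lies in ${\cal M}$; the point is to get one factor $a_P^2$ controlling the choice of $I$ (at most ${\sup(P)\choose k}$ of them, weighted by $a_P^2$) and the remaining factors controlling the pair $(Q,R)$ by $\sum_{Q,R\in{\cal M}}a_Q^2a_R^2\le(\sum_{P}a_P^2)^2$, with the leftover $|{\cal M}|$ in the denominator absorbing the fact that $S$ is determined rather than free. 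Summing, $\tr(B^2)\le|{\cal M}|\bigl(\sum_{P\in{\cal M}}a_P^2\bigr)\bigl(\sum_{P\in{\cal M}}{\sup(P)\choose k}a_P^2\bigr)$, which is $|{\cal M}|\bigl(\sum a_P^2\bigr)$ times $\tr(B)$.

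Combining the two estimates with Lemma~\ref{alon} gives
\begin{equation*}
\dim\partial^{=k}f=\rk(M)=\rk(B)\geq\frac{(\tr B)^2}{\tr(B^2)}\geq\frac{\bigl(\sum_{P\in{\cal M}}{\sup(P)\choose k}a_P^2\bigr)^2}{|{\cal M}|\bigl(\sum_{P\in{\cal M}}a_P^2\bigr)\bigl(\sum_{P\in{\cal M}}{\sup(P)\choose k}a_P^2\bigr)}=\frac{\sum_{P\in{\cal M}}{\sup(P)\choose k}a_P^2}{|{\cal M}|\sum_{P\in{\cal M}}a_P^2},
\end{equation*}
which is~(\ref{reallb}). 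One sanity check: when all $a_P\in\{0,1\}$ this collapses to Theorem~\ref{nonmult}, so the bookkeeping is at least consistent in that regime.

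The main obstacle I expect is the $\tr(B^2)$ step: unlike the 0/1 case, where ``valid quadruple'' is a combinatorial condition and one simply counts, here the summand is a signed product of four coefficients and one must choose the right symmetrization so that (i) the inequality is valid termwise, (ii) exactly one squared coefficient is available to pay for the $I$-sum with weight ${\sup(P)\choose k}$, and (iii) the remaining sum over the other two free indices factors as $(\sum a_P^2)^2$ while the determined fourth index contributes only the benign factor $|{\cal M}|$. Getting the constants to land precisely on $|{\cal M}|$ rather than $2|{\cal M}|$ or $|{\cal M}|^2$ is the delicate point; everything else is a direct transcription of the proof of Theorem~\ref{nonmult}.
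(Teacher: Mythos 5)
Your proposal is correct and follows the paper's own proof essentially step for step: the same matrix $B=M^T.M$, the same identity $\tr(B)=\sum_{P\in{\cal M}}{\sup(P) \choose k}a_P^2$, the same reparametrization of $\tr(B^2)$ by quadruples $(P,Q,R,I)$ with $K=P-I$, $L=Q-I$, $J=R-K$, and the same termwise AM--GM split into two products of squares, each containing one factor ($a_P^2$ or $a_Q^2$) whose index dominates $I$ and can therefore be charged the weight ${\sup(\cdot) \choose k}$, yielding $\tr(B^2)\le|{\cal M}|\tr(B)\sum_{P\in{\cal M}}a_P^2$ exactly as in Lemma~\ref{trb2real}. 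The one spot where your accounting narrative goes astray is the intermediate bound $\sum_{Q,R\in{\cal M}}a_Q^2a_R^2\le\bigl(\sum_{P}a_P^2\bigr)^2$, which is not the estimate that lands on the stated denominator; the correct bookkeeping (the one your final displayed bound implicitly uses, and the one the paper carries out) charges the binomial weight to the squared factor dominating $I$, a factor $\sum_{P\in{\cal M}}a_P^2$ to the other squared factor's index, and a bare $|{\cal M}|$ to the single remaining free monomial index, the fourth index being determined by the other three.
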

To 
make sense of the lower bound in this theorem, it is helpful to look at a 
couple of  special cases. 
If the coefficients $a_P$ all have the same absolute value,
e.g., $|a_P|=1$ for all $P \in {\cal M}$, the right-hand side of~(\ref{reallb}) reduces to 
${\sum_{P \in {\cal M}} {\sup(P) \choose k}}/{|{\cal M}|^2}$.
This is exactly the lower bound in Theorem~\ref{nonmult} (but now the coefficients of $f$ may be in $\{-1,0,1\}$ rather than $\{0,1\}$).

Our lower bound becomes weaker when the vectors 
$({\sup(P) \choose k})_{P \in {\cal M}}$ and $(a_P^2)_{P \in {\cal M}}$
are approximately orthogonal. This can happen when the monomials with large
support have small coefficients. In this case, as should be expected, 
Lemma~\ref{alon} is effectively unable to detect the presence of monomials
of large support. 
A probabilistic analysis shows that this bad behavior is atypical. 
Consider for instance the following semirandom model: we first choose a 
set ${\cal M}$ of monomials in some arbitrary (worst case) way, and then 
the $a_P$ are drawn independently at random from some common
probability 
distribution such that $\Pr[a_P=0]=0$.
\begin{corollary} \label{realcol}
Let $\displaystyle L(f)=\frac{\sum_{P \in {\cal M}} {\sup(P) \choose k}a_P^2}{|{\cal M}|\sum_{P \in {\cal M}} a_P^2}$
be the lower bound on the right-hand side of~(\ref{reallb}).

In the semirandom model described above, the expectation of $L(f)$ is:
$$E[L(f)]=\sum_{P \in {\cal M}} {\sup(P) \choose k}/{|{\cal M}|^2}.$$
\end{corollary}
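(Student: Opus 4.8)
The plan is to exploit the exchangeability of the coefficients $a_P$. First I would rewrite the lower bound as
\[
L(f) = \frac{1}{|{\cal M}|}\sum_{P \in {\cal M}} {\sup(P) \choose k}\, Y_P, \qquad \text{where } Y_P = \frac{a_P^2}{\sum_{Q \in {\cal M}} a_Q^2}.
\]
Since $\Pr[a_P = 0] = 0$, almost surely $\sum_{Q \in {\cal M}} a_Q^2 > 0$, so each $Y_P$ is a well-defined random variable taking values in $[0,1]$; in particular all the expectations below are finite and linearity of expectation applies without further justification.

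Next I would observe that, because the $a_P$ are independent and identically distributed, the family $(Y_P)_{P \in {\cal M}}$ is exchangeable: permuting the $a_P$ permutes the $Y_P$ accordingly and leaves the joint distribution invariant. Hence $E[Y_P]$ takes the same value $\mu$ for every $P \in {\cal M}$. Since $\sum_{P \in {\cal M}} Y_P = 1$ almost surely, taking expectations gives $|{\cal M}|\,\mu = 1$, that is $\mu = 1/|{\cal M}|$.

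Finally, by linearity of expectation,
\[
E[L(f)] = \frac{1}{|{\cal M}|}\sum_{P \in {\cal M}} {\sup(P) \choose k}\, E[Y_P] = \frac{1}{|{\cal M}|}\sum_{P \in {\cal M}} {\sup(P) \choose k} \cdot \frac{1}{|{\cal M}|} = \frac{\sum_{P \in {\cal M}} {\sup(P) \choose k}}{|{\cal M}|^2},
\]
which is the claimed formula. There is no substantial obstacle in this argument; the only point that will require (minor) care is the almost-sure positivity of the denominator $\sum_{Q} a_Q^2$ — precisely what the hypothesis $\Pr[a_P = 0] = 0$ supplies — together with the boundedness $0 \le Y_P \le 1$, which is what lets us pass expectations through the finite sum and through the identity $\sum_P Y_P = 1$ freely.
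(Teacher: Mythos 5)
Your proof is correct and follows essentially the same route as the paper's: write $L(f)$ as a weighted sum of the normalized variables $a_P^2/\sum_Q a_Q^2$, use the i.i.d.\ (exchangeability) assumption to conclude all have the same expectation, and combine this with $\sum_P Y_P = 1$ to get $E[Y_P]=1/|{\cal M}|$. The extra remarks on the almost-sure positivity of the denominator and the boundedness of $Y_P$ are sensible refinements that the paper leaves implicit.
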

Note that we obtain for $E[L(f)]$ the lower bound from the case 
where $|a_P|=1$ for all $P \in \cal M$.
\begin{proof}[Proof of Corollary~\ref{realcol}]
We write $L(f)=\sum_{P \in {\cal M}} {\sup(P) \choose k}X_P/|{\cal M}|$ 
where $X_P$ is the random variable:
$$X_P=\frac{a_P^2}{\sum_{J \in {\cal M}} a_J^2}.$$
By linearity of expectation, 
$E[L(f)]=\sum_{P \in {\cal M}} {\sup(P) \choose k} E[X_P]/|{\cal M}|$.
From the i.i.d assumption all the $X_P$ have the same expectation, and
since $\sum_{P \in {\cal M}} X_P =1$ 
this common expectation must be $1/|{\cal M}|$.
\end{proof}
The remainder of this section 
is devoted to the proof of Theorem~\ref{realth}.
We follow the proof of Theorem~\ref{nonmult}.
In particular, we still differentiate at most once with respect 
to each variable, we define the same matrix $M$ of partial derivatives
and the symmetric matrix $B=M^T.M$. 
We again have $\dim \partial^{=k}f \geq \rk(M)=\rk(B)$;
hence Theorem~\ref{realth} follows from Lemma~\ref{alon} and from the next 
two lemmas.
\begin{lemma} \label{trbreal}
$\tr(B)=\sum_{P \in {\cal M}} {\sup(P) \choose k}a_P^2.$
\end{lemma}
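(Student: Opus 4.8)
The plan is to follow the proof of Lemma~\ref{trbnonmult} almost verbatim, modifying only the accounting of entries now that the coefficients of $f$ need no longer be $0/1$. Recall that $B=M^T.M$, so that $\tr(B)=\sum_J (M^T.M)_{J,J}=\sum_{I,J} M_{I,J}^2$, where $I$ ranges over subsets of $[n]$ of size $k$ and $J$ over column indices. The only change from the $0/1$ case is that $M_{I,J}^2$ is now $a_{I+J}^2$ rather than an indicator: we have $M_{I,J}=a_{I+J}$ by the description of $M$ recalled at the start of Section~\ref{tracesec}, so $\tr(B)=\sum_{I,J} a_{I+J}^2$.

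Next I would reorganize this double sum by grouping the pairs $(I,J)$ according to the monomial $P=I+J$ they produce. For a fixed $n$-tuple $P$, the pair $(I,J)$ with $I$ of size $k$ and $I+J=P$ is determined by the choice of $I$ alone (then $J=P-I$), and such a choice is possible precisely when $I$ is a size-$k$ subset of $\sup(P)$ and all entries of $P$ used by $J$ remain nonnegative; since we differentiate at most once per variable, $I\in\{0,1\}^n$, so the admissible $I$ are exactly the $\binom{\sup(P)}{k}$ size-$k$ subsets of the support of $P$. (For $n$-tuples $P$ that are not exponent vectors of monomials of $f$ we have $a_P=0$, so they contribute nothing.) Hence $\tr(B)=\sum_{P}\binom{\sup(P)}{k}a_P^2=\sum_{P\in{\cal M}}\binom{\sup(P)}{k}a_P^2$, which is the claimed identity.

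I do not expect any real obstacle here: this is the straightforward generalization of Lemma~\ref{trbnonmult}, the only subtlety being to note that replacing $M_{I,J}\in\{0,1\}$ by $M_{I,J}=a_{I+J}\in\rr$ simply weights each previously-counted nonzero entry by $a_P^2$, while the combinatorial count of admissible pairs $(I,J)$ mapping to a given $P$ is unchanged. Together with the companion bound on $\tr(B^2)$ (the real analogue of Lemma~\ref{trb2nonmult}) and Lemma~\ref{alon}, this yields Theorem~\ref{realth}.
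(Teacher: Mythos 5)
Your proof is correct and follows essentially the same route as the paper: the paper likewise observes that $\tr(B)$ is the sum of the squared entries of $M$ and that $M_{I,P-I}=a_P$ for each of the $\binom{\sup(P)}{k}$ size-$k$ subsets $I$ of the support of $P$. Your version merely spells out the grouping of pairs $(I,J)$ by $P=I+J$ in more detail, which is a harmless elaboration.
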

\begin{proof}
By the proof of Lemma~\ref{trbnonmult}, $\tr(B)$ is equal to the sum of squared entries of $M$; and we have $M_{I,P-I}=a_P$ for each set $I$ of size $k$ contained in the support of $P$.
\end{proof}

\begin{lemma} \label{trb2real}
$\tr(B^2) \leq |{\cal M}|\tr(B)
\left(\sum_{R \in {\cal M}} a_R^2\right).$
\end{lemma}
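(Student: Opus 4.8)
The plan is to mimic the counting argument of Lemma~\ref{trb2nonmult}, the new point being to keep track of the coefficients. First I would expand, exactly as there,
$$\tr(B^2)=\sum_{I,J,K,L} M_{I,K}M_{I,L}M_{J,K}M_{J,L}=\sum_{I,J,K,L} a_{I+K}\,a_{I+L}\,a_{J+K}\,a_{J+L},$$
where $I,J$ range over the $k$-subsets of $[n]$ and $K,L$ over the column indices, with the convention $a_\gamma=0$ when $x^\gamma$ is not a monomial of $f$, so that only the quadruples with all four coefficients nonzero contribute. As in Lemma~\ref{trb2nonmult} I would reparametrise such a quadruple by the three monomials $P=I+K$, $Q=I+L$, $R=J+K$ together with the set $I$; then $K=P-I$, $L=Q-I$, $J=R-K$, and $J+L=Q+R-P$, so the contribution of the quadruple is $a_Pa_Qa_Ra_{Q+R-P}$.

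The additional ingredient, needed because the coefficients may now be negative, is one application of the AM--GM inequality. Since $P$ and $R$ share the column index $K$ while $Q$ and $Q+R-P$ share the column index $L$, I would pair up these two ``columns'':
$$|a_Pa_Qa_Ra_{Q+R-P}|\ \le\ \tfrac12\bigl(a_P^2a_R^2+a_Q^2a_{Q+R-P}^2\bigr),$$
which yields $\tr(B^2)\le\tfrac12(\Sigma_1+\Sigma_2)$, where $\Sigma_1$ and $\Sigma_2$ are the sums of $a_P^2a_R^2$ and of $a_Q^2a_{Q+R-P}^2$ over the valid tuples. For $\Sigma_1$ I would hold $P$ and $R$ fixed and count as in Lemma~\ref{trb2nonmult}: there are at most ${\sup(P)\choose k}$ admissible sets $I$ (those contained in the support of $P$) and at most $|{\cal M}|$ choices of the monomial $Q$, and these determine the whole tuple; hence $\Sigma_1\le|{\cal M}|\sum_{P,R\in{\cal M}}{\sup(P)\choose k}a_P^2a_R^2$, which equals $|{\cal M}|\,\tr(B)\left(\sum_{R\in{\cal M}}a_R^2\right)$ by Lemma~\ref{trbreal}. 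For $\Sigma_2$, setting $S=Q+R-P$ and holding $Q$ and $S$ fixed, the difference $R-P=S-Q$ is determined, so there are at most $|{\cal M}|$ choices of $P$ and at most ${\sup(Q)\choose k}$ choices of $I$ (those contained in the support of $Q$), and again the whole tuple is determined; this gives $\Sigma_2\le|{\cal M}|\,\tr(B)\left(\sum_{S\in{\cal M}}a_S^2\right)$ in the same way. Adding the two estimates proves the lemma.

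The step that needs care — and the place where a careless argument goes wrong — is the interplay between the AM--GM inequality and the restriction to \emph{valid} tuples. If one applied AM--GM and then summed $a_{I+K}^2a_{J+K}^2$ over \emph{all} quadruples $(I,J,K,L)$, the index $L$ would be unconstrained and one would pick up a spurious factor ${n\choose k}$; and grouping the two ``diagonal'' pairs instead, without the validity restriction, collapses to $\tr(B^2)\le\tr(B)^2$, which is far too weak (for a single monomial of full support this would only show that the rank exceeds $1$, whereas it is in fact ${n\choose k}$). The whole gain comes from remembering that $Q=I+L$ must be a monomial of $f$: this is exactly what bounds the number of relevant $L$ by $|{\cal M}|$ and lets the factor ${\sup(P)\choose k}$ survive, to be absorbed into $\tr(B)$ via Lemma~\ref{trbreal}.
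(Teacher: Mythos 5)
Your proof is correct and follows essentially the same route as the paper: the same reparametrisation of valid quadruples by $(P,Q,R,I)$, the same AM--GM pairing of $a_Pa_R$ with $a_Qa_{Q+R-P}$, and the same counting to absorb one factor into $\tr(B)$ via Lemma~\ref{trbreal}. The only (cosmetic) difference is in bounding the second sum, where you fix $Q$ and $S=Q+R-P$ and count choices of $P$, while the paper counts how many pairs $(P,R)$ yield a given $S$; both give the identical bound.
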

\begin{proof}
By the proof of Lemma~\ref{trb2nonmult}, 
\begin{equation} \label{sumoverV}
\tr(B^2)=\sum_{(I,J,K,L)} a_{I+K}.a_{I+L}.a_{J+K}.a_{J+L}.
\end{equation}
Here $I,J$ range over row indices of $M$ while $K,L$ range over column indices.
As in the proof of Lemma~\ref{trb2nonmult}, we call such a quadruple ``valid'' 
if the coefficients  $a_{I+K},a_{I+L},a_{J+K},a_{J+L}$ are all nonzero.
Let $\cal V$ be the set of valid quadruples.
Trying to mimic the proof of Lemma~\ref{trb2nonmult}, we will write 
\begin{equation} \label{sumoverU}
\tr(B^2)=\sum_{(P,Q,R,I) \in {\cal U}} a_{P}.a_{Q}.a_{R}.a_{Q+R-P}
\end{equation}
where $\cal U$ is the set of quadruples $(P,Q,R,I)$ such that:
\begin{enumerate}
\item $P,Q,R$ and $Q+R-P$ belong to $\cal M$ (the set of monomials of $f$) and $I$ is a row index of $M$, i.e., $I \in \{0,1\}^n$ and $|I|=k$.
\item 
$I \leq P$ and $I \leq Q$.
\item There exists a (unique) row index $J$ 
such that 
$P-I = R-J$.
\end{enumerate}
Equation~(\ref{sumoverU}) follows from~(\ref{sumoverV}) due to the following one-to-one correspondence between quadruples of  ${\cal U}$ and $\cal V$:
\begin{itemize}
\item[(i)] Given a quadruple $(I,J,K,L) \in {\cal V}$, set $P=I+K$, $Q=I+L$, $R=J+K$.
The quadruple $(P,Q,R,I)$ is in $\cal U$ since 
$Q+R-P=J+L$ and $P-I=R-J=K$.
\item[(ii)] A quadruple $(P,Q,R,I) \in {\cal U}$ has a unique preimage 
$(I,J,K,L) \in {\cal V}$, which 
is obtained as follows. A preimage must satisfy $K=P-I$, $L=Q-I$, $J=R-K$. 
This defines a quadruple $(I,J,K,L)$ such that $P-I=R-J$, so $J$ must be a row index by condition~3 in the definition of $\cal U$: $J \in \{0,1\}^n$ 
and $|J|=k$. It follows that
$(I,J,K,L) \in {\cal V}$ and that this quadruple is indeed a preimage of 
$(P,Q,R,I)$.
\end{itemize}
Since $2a_{P}.a_{Q}.a_{R}.a_{Q+R-P} \leq (a_P.a_R)^2+(a_Q.a_{Q+R-P})^2$,
it follows from~(\ref{sumoverU}) that
\begin{equation} \label{2sums}
2\tr(B^2) \leq \sum_{(P,Q,R,I) \in {\cal U}} a_{P}^2.a_{R}^2
+\sum_{(P,Q,R,I) \in {\cal U}} a_{Q}^2.a_{Q+R-P}^2.
\end{equation}
The first sum is upper bounded by 
\begin{equation} \label{sum1}
|{\cal M}|.\left(\sum_{P \in {\cal M}} {\sup(P) \choose k}a_P^2\right).
\left(\sum_{R \in {\cal M}} a_R^2\right)
\end{equation}
since there are at most $|\cal M|$ choices for $Q$ and we must have $I \leq P$ for a quadruple in $\cal U$.
This is equal to $$|{\cal M}|\tr(B)
\left(\sum_{R \in {\cal M}} a_R^2\right)$$
by Lemma~\ref{trbreal}.
Likewise, the second sum in~(\ref{2sums}) is upper bounded 
by $$\sum_{P,Q,R \in {\cal M}} {\sup(Q) \choose k}a_Q^2a_{Q+R-P}^2$$ 
since we have $I \leq Q$ for a quadruple in $\cal U$.
For any fixed $Q \in \cal M$, $\sum_{P,R \in \cal M} a_{Q+R-P}^2 
\leq |{\cal M}|.\sum_{S \in M} a_S^2$ since each term $a_S^2$ on the right-hand side can appear at
 most $|{\cal M}|$ times on the left-hand side.
We conclude that the second sum in~(\ref{2sums}) admits the same upper bound~(\ref{sum1}) 
as the first sum, and the lemma is proved.
\end{proof}

\subsection{Polynomial-time computable lower bounds} \label{polytime}

The lower bound 
$$L(f)=\frac{\sum_{P \in {\cal M}} {\sup(P) \choose k}a_P^2}{|{\cal M}|\sum_{P \in {\cal M}} a_P^2}$$
in Theorem~\ref{realth} is clearly computable in polynomial time 
from $f$ and $k$. 
Recall that we have obtained this lower bound by constructing a symmetric matrix $B$ such that $\dim \partial^{=k} f \geq \tr(B)^2 / \tr(B^2) \geq L(f)$.
The quantity $\tr(B)^2 / \tr(B^2)$ is therefore a better lower bound on 
$\dim \partial^{=k} f$ than $L(f)$. Like $L(f)$, 
it turns out to be computable in polynomial time.
This is not self-evident because $B$ may be of exponential size
(which is the source of the $\sharpp$-hardness result in the next section).
\begin{theorem}
There is an algorithm which, given $f$ and $k$, computes
the lower bound $\tr(B)^2 / \tr(B^2)$ on $\dim \partial^{=k} f$
in polynomial time.
\end{theorem}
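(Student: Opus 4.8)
The plan is to evaluate the two numbers $\tr(B)$ and $\tr(B^2)$ directly from the sparse representation of $f$, without ever forming the (possibly exponentially large) matrices $M$ and $B$, and then to output the rational number $\tr(B)^2/\tr(B^2)$; in the degenerate case $\tr(B^2)=0$ we have $B=0$ and the bound is trivially $0$, so we output $0$. We may also assume $k\le\deg f$, since otherwise $\partial^{=k}f=0$. The first quantity is immediate: by Lemma~\ref{trbreal} we have $\tr(B)=\sum_{P\in{\cal M}}\binom{\sup(P)}{k}a_P^2$, which is a sum of $|{\cal M}|$ terms, each of polynomial bit-size (recall that $|{\cal M}|$, the number of monomials of $f$, is bounded by the input size). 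Hence $\tr(B)$ is computable in polynomial time.

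The real work is $\tr(B^2)$. The starting point is equation~(\ref{sumoverU}) from the proof of Lemma~\ref{trb2real}, which expresses $\tr(B^2)$ as a sum over the set ${\cal U}$ of quadruples $(P,Q,R,I)$ with $P,Q,R,\,Q+R-P\in{\cal M}$ and $I$ a size-$k$ row index satisfying conditions~2 and~3 in the definition of ${\cal U}$. I would group this sum by its first three coordinates: for a fixed triple $(P,Q,R)\in{\cal M}^3$ with $Q+R-P\in{\cal M}$, the summand $a_P a_Q a_R a_{Q+R-P}$ does not depend on $I$, so
\begin{equation}
\tr(B^2)=\sum_{\substack{(P,Q,R)\in{\cal M}^3\\ Q+R-P\in{\cal M}}} a_P\,a_Q\,a_R\,a_{Q+R-P}\;\nu(P,Q,R),
\end{equation}
where $\nu(P,Q,R)$ counts the vectors $I\in\{0,1\}^n$ with $|I|=k$, $I\le P$, $I\le Q$, and $R-P+I\in\{0,1\}^n$ with $|R-P+I|=k$ (this last requirement records that the forced value $J=R-P+I$ is again a size-$k$ row index). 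The key observation is that $\nu$ splits over coordinates and collapses to a single binomial coefficient. In coordinate $i$: if $|P_i-R_i|\ge 2$ then $\nu(P,Q,R)=0$; if $P_i-R_i=1$ then $I_i$ is forced to $1$; if $P_i-R_i=-1$ then $I_i$ is forced to $0$; if $P_i=R_i$ and $\min(P_i,Q_i)=0$ then $I_i$ is forced to $0$; and if $P_i=R_i$ and $\min(P_i,Q_i)\ge 1$ then $I_i$ is free in $\{0,1\}$. Writing $a$ for the number of forced-$1$ coordinates and $b$ for the number of free coordinates, the constraint $|I|=k$ selects exactly $k-a$ of the free coordinates to be $1$, while a short computation of $\sum_i(R_i-P_i+I_i)$ shows that the constraint $|R-P+I|=k$ then holds precisely when $\deg P=\deg R$. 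Hence $\nu(P,Q,R)=\binom{b}{k-a}$ when $\deg P=\deg R$ and $0\le k-a\le b$, and $\nu(P,Q,R)=0$ otherwise.

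With this formula in hand the algorithm is straightforward: enumerate all $|{\cal M}|^3$ triples $(P,Q,R)$; for each, form $Q+R-P$ and test membership in ${\cal M}$ (using a dictionary built once from the input, or by presorting ${\cal M}$); if it belongs, run a single scan over the $n$ coordinates to determine the forced and free coordinates (and to detect the killer case $|P_i-R_i|\ge 2$), check $\deg P=\deg R$, form $\binom{b}{k-a}$ together with the product $a_P a_Q a_R a_{Q+R-P}$, and accumulate. All arithmetic is on integers, or rationals in the scaled monomial basis, of polynomial bit-size — note in particular $\binom{b}{k-a}\le 2^n$ has $O(n)$ bits — so the whole procedure runs in time polynomial in the size of $f$ and $k$; we then return $\tr(B)^2/\tr(B^2)$. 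I do not anticipate a genuine obstacle: the only point requiring care is the exact evaluation of $\nu(P,Q,R)$, i.e.\ the bookkeeping of forced versus free coordinates and of the two degree/weight constraints, and this is essentially the same case analysis already performed for the correspondence ${\cal U}\leftrightarrow{\cal V}$ in the proof of Lemma~\ref{trb2real}.
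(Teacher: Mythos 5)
Your proof is correct and follows essentially the same route as the paper: compute $\tr(B)$ from Lemma~\ref{trbreal}, group the sum~(\ref{sumoverU}) for $\tr(B^2)$ by the triple $(P,Q,R)$, and evaluate the multiplicity $\nu(P,Q,R)$ (the paper's $N(P,Q,R)$) as a single binomial coefficient via a coordinatewise forced/free analysis. The one case your analysis omits --- a coordinate with $P_i-R_i=1$ but $Q_i=0$, where the forced value $I_i=1$ conflicts with the constraint $I\le Q$ --- is harmless, because there $(Q+R-P)_i<0$ and your membership test $Q+R-P\in{\cal M}$ already forces $\nu(P,Q,R)=0$ (the paper lists this as a separate step~4, which is likewise subsumed by its step~1).
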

\begin{proof}
We build on the proof of Theorem~\ref{realth}. Lemma~\ref{trbreal} shows that 
$\tr(B)$ can be computed in polynomial time, so it remains to do the same
for $\tr(B^2)$. In the proof of Lemma~\ref{trb2real}, we have defined a 
set of quadruples $\cal U$ such that 
$$\tr(B^2)=\sum_{(P,Q,R,I) \in {\cal U}} a_{P}.a_{Q}.a_{R}.a_{Q+R-P}.$$
This can be rewritten as:
$$\tr(B^2)=\sum_{(P,Q,R) \in {\cal M}} N(P,Q,R).a_{P}.a_{Q}.a_{R}.a_{Q+R-P}$$
where we denote by $N(P,Q,R)$ the number of row indices $I$ such that 
$(P,Q,R,I) \in {\cal U}$.
It therefore remains to show that $N(P,Q,R)$ can be computed in polynomial time.
Toward this goal we make two observations.
\begin{itemize}
\item[(i)] Condition~2 in the definition of $\cal U$ means that $I \leq \min(P,Q)$, where the $n$-tuple $\min(P,Q)$ is the coordinatewise minimum of $P$ and
$Q$. 
\item[(ii)] The equality $P-I=R-J$ in condition~3 is equivalent to $P-R=I-J$, 
hence $P-R \in \{-1,0,1\}^n$ since $I,J \in \{0,1\}^n$. Moreover, since 
$I$ and $J$ each have $k$ nonzero coordinates, $P-R$ must contain the
same number of $1$'s and $-1$'s. By observation~(i), 
the positions of $1$'s must be positive in $\min(P,Q)$.

\end{itemize}
We can therefore compute $N(P,Q,R)$ as follows.
\begin{enumerate}
\item If $Q+R-P$ is not a monomial of $f$, $N(P,Q,R)=0$.
\item If $P-R$ is not in $\{-1,0,1\}^n$, $N(P,Q,R)=0$.
\item If  $P-R$ does not contain the same number of $1$'s and $-1$'s, 
$N(P,Q,R)=0$.
\item If some of the positions of $1$'s in $P-R$ contain a $0$ in $\min(P,Q)$, 
$N(P,Q,R)=0$.
\item Let $\mathrm{ones}(P,R)$ be the number of
$1$'s in $P-R$ and $\mathrm{zeros}(P,Q,R)$ the number of $0$'s in $P-R$ 
such that we have a positive entry in $\min(P,Q)$ at the same position.
Then $N(P,Q,R)={\mathrm{zeros}(P,Q,R) \choose k-\mathrm{ones}(P,R)}$: from the relation $P-R=I-J$, the positions with a 1 in $P-R$ must contain a $1$ in $I$.
So it remains to choose the remaining $k-\mathrm{ones}(P,R)$ nonzero positions of $I$. We can only choose them among the positions that are positive in
$\min(P,Q)$ (by (i)) and contain a 0 in $P-R$.
\end{enumerate}
\end{proof}

\subsection{Elementary symmetric polynomials} \label{elementarypoly}

A natural question is whether the inequality $\rk (B) \geq \frac{(\tr B)^2}{\tr(B^2)}$ in Lemma \ref{alon} 
is tight when $B = M^T.M$ and $M$ comes from a partial derivatives matrix.
It is well known that this inequality 
is in general far from tight, since it is obtained by means of the Cauchy-Schwarz inequality.
However in our case, due to the particular shape of the matrix $B$, it is not 
a priori clear whether 
a large gap can exist between $\rk (B)$ and $\frac{(\tr B)^2}{\tr(B^2)}$.
In the following, we show that arbitrarily large gaps can indeed be achieved. Our source of examples are the elementary symmetric polynomials $Sym_{d,n}(x_1, \dots, x_n) = \sum_{|I| = d} x^I$. 
Here the vector $I$ of exponents belongs to $\{0,1\}^n$, and $x^I$ denotes as usual the multilinear monomial $x_1^{i_1}. \cdots . x_n^{i_n}$.

More precisely, we will show the following.
\begin{proposition}
	For any fixed positive integers $d, k < d$, the family of polynomials $f_n = Sym_{d, n}$ has the following property:
		if we consider $u_n = {\rk}(B_n) = \dim \partial^{= k} f_n$ the sequence of dimensions of partial derivatives, and $v_n = \frac{(\tr B_n)^2}{\tr(B_n^2)}$ the sequence of lower bounds for the dimension, we have that $v_n \rightarrow 1$,  whereas $B_n$ is of full rank and, hence, $u_n \rightarrow + \infty$.
\end{proposition}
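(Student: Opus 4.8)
The plan is to compute the three relevant quantities for $f_n = Sym_{d,n}$: the rank $u_n = \dim\partial^{=k}f_n$, the trace $\tr(B_n)$, and the trace $\tr(B_n^2)$, and then read off the asymptotics. First I would establish that $B_n$ has full rank: the partial derivatives $\partial_I Sym_{d,n}$ for $|I|=k$ are, up to scaling, the polynomials $Sym_{d-k, n-k}$ on the complementary variable sets, and distinct $k$-subsets $I$ give derivatives supported on distinct monomial sets in a way that forces linear independence. Concretely, a derivative $\partial_I Sym_{d,n}$ expands as $\sum_{J} a_{I+J} x^J$ where $a_{I+J}=1$ iff $I+J\in\{0,1\}^n$ has exactly $d$ ones, i.e. iff $J$ is a $(d-k)$-subset disjoint from $I$. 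One checks that the monomial $x^J$ with $J$ a $(d-k)$-subset appears in $\partial_I Sym_{d,n}$ precisely for the $\binom{n-(d-k)}{k}$ many $k$-sets $I$ disjoint from $J$; a standard argument (e.g. exhibiting, for each $I$, a monomial lying in $\partial_I Sym_{d,n}$ but in no other derivative — which fails for small $n$ but works once $n$ is large enough, and here $d,k$ are fixed while $n\to\infty$) shows the $\binom nk$ rows of $M_n$ are independent, so $u_n=\binom nk\to+\infty$.

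Next I would compute $\tr(B_n)$, which by Lemma~\ref{trbnonmult} (the coefficients are $0/1$) equals $\sum_{P\in{\cal M}}\binom{\sup(P)}{k}$. Every monomial of $Sym_{d,n}$ has support exactly $d$, and there are $\binom nd$ of them, so $\tr(B_n)=\binom nd\binom dk$. For $\tr(B_n^2)$ I would go back to the quadruple count $\tr(B_n^2)=\#\{(I,J,K,L): M_{I,K}M_{I,L}M_{J,K}M_{J,L}\neq 0\}$, which for $0/1$ coefficients is the number of valid quadruples, i.e. quadruples where $I+K, I+L, J+K, J+L$ are all $d$-subsets of $[n]$. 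The key combinatorial observation is that all of $I,J$ are $k$-sets and all of $K,L$ are $(d-k)$-sets, with $I$ disjoint from both $K$ and $L$, and $J$ disjoint from both $K$ and $L$. One parametrizes: choose $K$ and $L$ as $(d-k)$-subsets (with $K\cup L$ of some size $m$ between $d-k$ and $2(d-k)$), then $I$ and $J$ must be $k$-subsets avoiding $K\cup L$; the number of choices is $\binom{n-|K\cup L|}{k}^2$. Summing over the shape of $(K,L)$ gives $\tr(B_n^2)=\sum_{m}(\text{number of pairs }(K,L)\text{ with }|K\cup L|=m)\cdot\binom{n-m}{k}^2$, a polynomial in $n$ of degree $2(d-k)+2k = 2d$ — the same degree as $\tr(B_n)^2 = \binom nd^2\binom dk^2$, which also has degree $2d$ in $n$.

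The final step is to extract the leading coefficients. The dominant term in $\tr(B_n^2)$ comes from $m$ as small as possible, i.e. $K=L$ (forcing $m=d-k$), which contributes $\binom nd\cdot\text{(choices within)}\cdots$; more carefully, the number of $(K,K)$ pairs is $\binom{n}{d-k}$ and each contributes $\binom{n-(d-k)}{k}^2$, giving leading term $\sim \frac{n^{d-k}}{(d-k)!}\cdot\frac{n^{2k}}{(k!)^2} = \frac{n^{d+k}}{(d-k)!(k!)^2}$ — but this is only degree $d+k<2d$. The true degree-$2d$ terms require $K,L$ to have $|K\cup L|=d-k$ as well, i.e. again $K=L$; wait — so actually one must recount. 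Since $|K\cup L|\ge d-k$ always, the factor $\binom{n-m}{k}^2\sim n^{2k}$ and the pair-count is a polynomial in $n$ of degree $m$, the term of shape $m$ has degree $m+2k\le 2(d-k)+2k=2d$, with equality only when $m=2(d-k)$, i.e. $K,L$ disjoint; that term's pair-count has leading coefficient $\binom{n}{d-k}\binom{n-(d-k)}{d-k}\sim n^{2(d-k)}/((d-k)!)^2$, giving $\tr(B_n^2)\sim n^{2d}/((d-k)!)^2(k!)^2$. Meanwhile $\tr(B_n)^2\sim (n^d/d!)^2(d!/(k!(d-k)!))^2 = n^{2d}/((d-k)!)^2(k!)^2$. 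The two leading coefficients coincide, hence $v_n = \tr(B_n)^2/\tr(B_n^2)\to 1$. I expect the main obstacle to be the bookkeeping in the $\tr(B_n^2)$ computation — correctly organizing the quadruple count by the overlap pattern of $K$ and $L$ (and checking no lower-order subtlety, such as the disjointness constraints between $I,J$ and $K\cup L$, spoils the leading-order match) — together with making the full-rank claim rigorous for all large $n$ rather than merely plausible.
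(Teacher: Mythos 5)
Your computation of the traces is sound and follows essentially the same route as the paper: $\tr(B_n)=\binom{n}{d}\binom{d}{k}$ agrees with the paper's $\binom{n-k}{d-k}\binom{n}{k}$, and your identification of the disjoint pairs $K,L$ as the dominant contribution to $\tr(B_n^2)$ is the same phenomenon the paper exploits (it keeps only the subsum over disjoint index pairs to get a lower bound on $\tr(B_n^2)$, then uses the a priori inequality $v_n\geq 1$ to squeeze; your two-sided leading-coefficient match is a slightly heavier but equally valid way to conclude $v_n\to 1$).

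The genuine gap is the full-rank claim. The matrix $M_n$ here is the disjointness (inclusion) matrix, with $M_{I,J}=1$ iff the $k$-set $I$ and the $(d-k)$-set $J$ are disjoint, and the argument you sketch --- exhibiting, for each $I$, a monomial appearing in $\partial_I Sym_{d,n}$ and in no other derivative --- cannot work for any $n$: as you yourself compute, every column of $M_n$ has exactly $\binom{n-(d-k)}{k}>1$ nonzero entries, so no derivative owns a private monomial, and the matrix is very far from being in echelon form under any ordering. The statement that this matrix has full rank, i.e.\ $\rk(M_n)=\min\{\binom{n}{k},\binom{n}{d-k}\}$, is a nontrivial classical theorem of Gottlieb~\cite{Gottlieb}, which is exactly what the paper invokes; you need either to cite it or to give one of its actual proofs (typically an inductive or polynomial-method argument), neither of which resembles your sketch. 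A secondary slip in the same step: the $\binom{n}{k}$ \emph{rows} of $M_n$ are linearly independent only when $k\leq d-k$; when $2k>d$ the rank is $\binom{n}{d-k}<\binom{n}{k}$, so you should state $u_n=\min\{\binom{n}{k},\binom{n}{d-k}\}$. This does not affect the conclusion $u_n\to+\infty$ (both binomials tend to infinity since $1\leq k\leq d-1$), but as written the claim is false in half the parameter range.
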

Note that since $d$ is fixed, the polynomial $f_n$ is sparse: it contains only $n^{O(1)}$ monomials.
\begin{proof}
	The matrix $M$ of partial derivatives of $f_n$ has only 0/1-coefficients and the coefficient $M_{I, J}$ is non-zero iff
	$I \cap J = \emptyset$ (with $|I| = k, |J| = d - k$). This matrix is commonly known as the
	{\it disjointness matrix} and 
has proved useful in  communication complexity~\cite{KN97} and of course in algebraic complexity~\cite{NW96} for the study of elementary symmetric polynomials.\footnote{Variations on this matrix also proved
useful for the analysis of the {\em shifted} partial derivatives of symmetric polynomials~\cite{fournier15}.}
In particular, by \cite{Gottlieb}, we have that $M$ is of full rank,
	i.e., that $u_n = \dim \partial^{=k} f_n = \min\{ \binom{n}{k}, \binom{n}{d-k}\}$.
	This directly implies that $u_n \rightarrow +\infty$.\\
	We already know that $v_n \geq 1$ for all $n$, so we only need to compute an upper bound on $v_n$ that tends to 1 to obtain $v_n \rightarrow 1$.
	To do so, we first compute the coefficients of the matrix $B = M^T.M$:
	\begin{align*}	
		B_{I,J} &= \sum_{|K| = d - k} M_{I,K}M_{J,K} 
					= \sum_{\substack{|K| = d - k \\ K \cap I = K \cap J = \emptyset}} 1 = \binom{n - |I \cup J|}{d - k}
	\end{align*}
	Notice that the value of a diagonal entry $B_{I,I} = \binom{n-k}{d-k}$ is independent of $I$, hence we can easily compute
	the trace of $B$: $\tr(B) = \binom{n - k}{d - k} \binom{n}{k}$.
	A diagonal entry of $B^2$ is of the form $(B^2)_{I,I} = \sum_{|J| = k} (B_{I,J})^2$.
	In order to obtain an upper bound on $v_n$, it is enough to lower bound $\tr(B^2)$.
	Since all the terms are non-negative, we will consider the following subsum:
	\[
		(B^2)_{I,I} \geq \sum_{\substack{|J| = k \\ I \cap J = \emptyset}} (B_{I,J})^2
			= \sum_{\substack{|J| = k \\ I \cap J = \emptyset}} \binom{n-2k}{d-k}^2
			= \binom{n-2k}{d-k}^2 \binom{n-k}{k}.
	\]
	Hence $Tr(B^2) = \sum_{|I| = k}\ (B^2)_{I,I} \geq \binom{n-2k}{d-k}^2 \binom{n-k}{k}  \binom{n}{k}$. Finally, we obtain the following upper bound
	\begin{equation}\label{upboundvn}
		v_n \leq \frac{\binom{n-k}{d-k}^2 \binom{n}{k}^2}{\binom{n-2k}{d-k}^2 \binom{n-k}{k} \binom{n}{k}}
			\xrightarrow[n \rightarrow \infty]{} 1 
	\end{equation}
\end{proof}

This proves that for constant $k,d$, the gap can be as large as we want, but one can ask whether such large gaps can also be achieved 
when $k$ and $d$ are increasing functions of $n$.
Let us consider the case where $k$ and $d$ are proportional to $n$, i.e., $k=\alpha n$ and $d=\beta n$ for some constants $\alpha,\beta <1$.
Now, it is no longer true that $v_n \rightarrow 1$. For example, for $\alpha = 0.2$ and $\beta = 0.4$ we have that $v_n \rightarrow \infty$.
 However, we can still prove that $\frac{v_n}{u_n} \rightarrow 0$ for certain values of $\alpha$ and $\beta$. 
In the following proposition, to make sure that $k=\alpha n$ and $d=\beta n$ are always integers we set $k = k' m , d = d' m$ and $n = n'm$ where $m$ is a new parameter and $k',d',n'$ are constants (so $\alpha=k'/n'$ and $\beta=d'/n'$).
\begin{proposition}\label{badtrace}
	For any positive integers $k', d', n'$ such that $k' < d' < n'/2$, the family of polynomials $f_m = Sym_{d'm, n'm}$ has the following property:
		if we consider $u_m = \dim \partial^{= k' m} f_m$ and $v_m = \frac{(\tr B_m)^2}{\tr(B_m^2)}$, we have $\frac{v_m}{u_m} \rightarrow 0$.
\end{proposition}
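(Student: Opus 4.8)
The plan is to reuse the exact computations from the proof of the previous proposition — which were carried out for arbitrary $n, k, d$ and only specialized to constant $k, d$ at the very end — and then feed in the scaling $k = k'm$, $d = d'm$, $n = n'm$ and perform an asymptotic (Stirling) analysis of the resulting ratio $v_m / u_m$ as $m \to \infty$. Concretely, from the previous proof we already have the exact value $u_m = \min\{\binom{n'm}{k'm}, \binom{n'm}{d'm - k'm}\} = \binom{n'm}{k'm}$ (using $k' < d'/2$... more precisely $k' m < (d'm)/2$ is what we would need; if $d' < 2k'$ one takes the other term, but since $d' < n'/2$ and $k' < d'$ both binomials are well-defined and one is the min — I will just write $u_m = \min\{\binom{n'm}{k'm}, \binom{n'm}{(d'-k')m}\}$), and the upper bound
$$
v_m \le \frac{\binom{n'm - k'm}{(d'-k')m}^2 \binom{n'm}{k'm}^2}{\binom{n'm - 2k'm}{(d'-k')m}^2 \binom{n'm - k'm}{k'm} \binom{n'm}{k'm}} = \frac{\binom{(n'-k')m}{(d'-k')m}^2 \binom{n'm}{k'm}}{\binom{(n'-2k')m}{(d'-k')m}^2 \binom{(n'-k')m}{k'm}}.
$$
So it suffices to show that the right-hand side, divided by $u_m = \min\{\binom{n'm}{k'm},\binom{n'm}{(d'-k')m}\}$, tends to $0$.

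The key step is the asymptotic estimate. Each binomial $\binom{am}{bm}$ with $0 < b < a$ satisfies $\frac{1}{m}\log \binom{am}{bm} \to a \, H(b/a)$ as $m \to \infty$, where $H(p) = -p\log p - (1-p)\log(1-p)$ is the binary entropy function (this is the standard Stirling estimate; the polynomial prefactors are subexponential and hence irrelevant to the exponential rate). Taking $\frac{1}{m}\log$ of the bound above and of $u_m$, the claim $v_m/u_m \to 0$ reduces to the strict inequality of exponential rates
$$
2(n'-k')H\!\left(\tfrac{d'-k'}{n'-k'}\right) + n' H\!\left(\tfrac{k'}{n'}\right) - 2(n'-2k')H\!\left(\tfrac{d'-k'}{n'-2k'}\right) - (n'-k')H\!\left(\tfrac{k'}{n'-k'}\right) \;<\; \min\!\left\{ n' H\!\left(\tfrac{k'}{n'}\right),\, n' H\!\left(\tfrac{d'-k'}{n'}\right) \right\}.
$$
Writing $\alpha = k'/n'$, $\beta = d'/n'$ this is a statement purely about the entropy function on the open region $0 < \alpha < \beta < 1/2$, and the whole proposition comes down to verifying it there.

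I expect the main obstacle to be precisely this entropy inequality: it is elementary in the sense that it involves only $H$ on explicit arguments, but it is not obviously "clean," and the honest way to establish it is a convexity / calculus argument — for instance, showing the left-hand side is dominated termwise by exploiting $H$'s concavity and the substructure relations among the arguments (note $\frac{d'-k'}{n'-2k'} \ge \frac{d'-k'}{n'-k'}$ and $\frac{k'}{n'-k'} \ge \frac{k'}{n'}$, so the subtracted terms are "pulling in the right direction"), or by differentiating in $\beta$ with $\alpha$ fixed and checking monotonicity plus a boundary case. A secondary (and minor) obstacle is bookkeeping: being careful that all binomial coefficients appearing have their lower index strictly between $0$ and the upper index for all large $m$ — this uses exactly the hypotheses $k' < d' < n'/2$ (e.g. $n' - 2k' > d' - k'$ needs $n' > d' + k'$, which follows from $d' < n'/2$ and $k' < d' < n'/2$). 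If the clean entropy inequality turns out to fail for some corner of the region, the fallback is to prove the weaker-looking but sufficient statement only for the parameter ranges the paper actually cares about, but I would first try to push the general argument through.
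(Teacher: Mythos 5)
Your plan is essentially the paper's proof: reuse the exact bound~(\ref{upboundvn}) from the preceding proposition, substitute $k=k'm$, $d=d'm$, $n=n'm$, apply Stirling, and reduce $v_m/u_m\to 0$ to a strict inequality between exponential rates (your entropy formulation with $aH(b/a)=f(a)-f(b)-f(a-b)$, $f(t)=t\log t$, is the same quantity the paper calls $\gamma$). Two points of difference, one of which is a real gap. First, the paper begins with the reduction $\dim\partial^{=k}f=\dim\partial^{=d-k}f$ (and the corresponding invariance of $v_m$, since $M^TM$ and $MM^T$ have the same nonzero spectrum), which lets it assume $2k'<d'$; this both pins down $u_m=\binom{n}{k}$ without your $\min$ and shrinks the region over which the rate inequality must hold to $0<2\alpha<\beta<1/2$. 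You keep the full region $0<\alpha<\beta<1/2$, where you would have to check the inequality separately (or notice the same symmetry). Second, and more importantly, the decisive step --- the strict negativity of $\gamma=f(1-\alpha)-2f(1-\beta)-f(1-2\alpha)+2f(1-\alpha-\beta)+f(\alpha)$ on the relevant region --- is only anticipated in your proposal (``I expect\ldots the honest way is a convexity argument''), not carried out; the paper itself settles it by a computer-aided computation rather than a clean analytic argument, so your proposed termwise concavity domination is genuinely unproven extra work, not a routine verification. As written, the proposal correctly sets up the reduction but stops exactly at the inequality on which the whole proposition rests.
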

\begin{proof}
	We set $k := k' m, d := d' m, n = n'm$. Since $f_m$ has degree $d'm$, then $\dim \partial^{= k' m} f_m = \dim \partial^{= (d'-k') m} f_m $. Hence
	we may assume without loss of generality that $2k' < d'$. Recall that $u_m = \min\{\binom{n}{k},\binom{n}{d-k}\}= \binom{n}{k}$. If we do the same
	proof as in proposition above and use the same upper bound as in (\ref{upboundvn}) on $v_m$, we get:
	\[
		\frac{v_m}{u_m} \leq \frac{\binom{n-k}{d-k}^2}{\binom{n-2k}{d-k}^2 \binom{n-k}{k}} =: w_m
	\]
	For any increasing funcions $g_1(m), g_2(m)$ such that $g_1(m) > g_2(m)$ and $g_1(m) - g_2(m)$ is also increasing, Stirling's approximation formula 
	gives us the following asymptotic for the binomial coefficient $\binom{g_1}{g_2}$: 
		\[ 
			\binom{g_1}{g_2} \sim \sqrt{\frac{g_1}{2\pi g_2 (g_1-g_2)}} \exp[g_1 \log(g_1) - g_2 \log(g_2) - (g_1-g_2) \log(g_1-g_2)]. 
		\]	
	Thus,
	we set $f(t) := t\log(t)$ and obtain that
	\begin{align*}
		w_m \sim P(m) \exp[& f(n-k)  - 2f(n-d) - f(n-2k)  + 2f(n-d-k)  + f(k)  ]
	\end{align*}
	with $P(m)$ the square root of some rational function. We factor out $n$ and operate to obtain
	\vspace*{-0.1cm}
	\[
		w_n \sim P(m) \exp[n(f(1-\alpha) - 2f(1-\beta) - f(1 - 2\alpha) + 2f(1 - \alpha - \beta) + f(\alpha))]
	\]
	with $\alpha := k/n$ and $\beta := d/n$ and $0 < 2 \alpha < \beta < 1/2$. A computer aided computation yields that for
	these values of $\alpha$ and $\beta$, we have that  
	$\gamma := f(1-\alpha) - 2f(1-\beta) - f(1 - 2\alpha) + 2f(1 - \alpha - \beta) + f(\alpha) < 0$, hence we 
	finally obtain that $w_m \sim P(m)\exp[\gamma n' m] \xrightarrow[m \rightarrow \infty]{} 0$.
\end{proof}

\section{$\sharpp$-hardness result for the space of partial derivatives}
\label{hardsection}

In this section it is convenient to work with the space $\partial^+ f$ spanned
by partial derivatives of $f$ of order $r$ 
where $1 \leq r \leq \deg(f)-1$.
We will work with homogeneous polynomials, and for those polynomials we have
$\dim \partial^+ f = \dim \partial^* f - 2$.

\begin{theorem} \label{hardpartials}
It is $\sharpp$-hard to compute $\dim \partial^* f$ for an input polynomial
$f$ given in expanded form (i.e., written as a sum of monomials).
This result remains true for multilinear homogeneous polynomials with
coefficients in $\{0,1\}$. 
\end{theorem}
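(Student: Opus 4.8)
The plan is to give a polynomial-time (Turing) reduction from the $\sharpp$-complete problem of counting the independent sets of a graph~\cite{provan83}. Given a graph $G=(V,E)$ with $V=[n]$ and $E=\{e_1,\dots,e_m\}$, I introduce $m$ fresh variables $y_1,\dots,y_m$ and form
\[
f=\sum_{j=1}^m y_j\prod_{v\in V\setminus e_j} x_v .
\]
This $f$ is multilinear, homogeneous of degree $n-1$, and has $0/1$ coefficients, so it is a legal instance of the restricted problem. (Its monomials are the facets of an abstract simplicial complex whose $y$-free faces are precisely the subsets of $V$ disjoint from at least one edge of $G$; the same argument will therefore also yield the face-counting result announced in the introduction.) The core of the proof is the identity
\[
\dim\partial^* f \;=\; 2a ,\qquad a:=\bigl|\{\,S\subseteq V:\ S\cap e_j=\emptyset\ \text{for some }j\,\}\bigr| .
\]

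To prove this identity I would argue as follows. Every monomial of $f$ contains exactly one $y$-variable, and a monomial containing $y_j$ is divisible by no $y_{j'}$ with $j'\neq j$; hence every single derivative $\partial_D f$ is homogeneous of degree $0$ or $1$ in the $y$-variables (and is $0$ when $D$ contains two distinct $y$'s). Consequently $\partial^* f=U\oplus W$, the direct sum of its $y$-degree-$0$ and $y$-degree-$1$ parts. A $y$-degree-$0$ generator is $\partial_{\{y_j\}\cup D'}f=x^{(V\setminus e_j)\setminus D'}$ for $D'\subseteq V\setminus e_j$, so $U=\mathrm{span}\{x^S:\ S\subseteq V\setminus e_j\ \text{for some }j\}$ and $\dim U=a$. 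A $y$-degree-$1$ generator is $\partial_D f=g_D:=\sum_{j:\,D\cap e_j=\emptyset}y_j\,x^{(V\setminus e_j)\setminus D}$ for $D\subseteq V$, and $g_D\neq0$ exactly when $D$ is one of the $a$ sets above, so $W=\mathrm{span}\{g_D\}$.

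It then remains to show that these $a$ vectors $g_D$ are linearly independent. In any relation $\sum_D c_D g_D=0$, the coefficient of the monomial $y_j x^S$ (for any $j$ and any $S\subseteq V\setminus e_j$) equals $c_{(V\setminus e_j)\setminus S}$; since $(V\setminus e_j)\setminus S$ runs over all the sets counted by $a$ as $j$ and $S$ vary, every $c_D$ must vanish. Hence $\dim W=a$ and $\dim\partial^* f=2a$. To finish, observe that $\{S:\ S\cap e_j\neq\emptyset\text{ for all }j\}$ is exactly the family of vertex covers of $G$, and $S\mapsto V\setminus S$ is a bijection between vertex covers and independent sets; therefore $a=2^n-i(G)$, where $i(G)$ denotes the number of independent sets of $G$, and $i(G)=2^n-\tfrac12\dim\partial^* f$. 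Thus a single evaluation of $\dim\partial^* f$ on the instance $f$ yields $i(G)$ after elementary arithmetic, which is the required reduction.

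The only delicate point is the identity $\dim\partial^* f=2a$: one has to check that the grading by $y$-degree genuinely decomposes $\partial^* f$ (which uses that each monomial of $f$ carries its own private $y$-variable) and that the family $\{g_D\}$ has no linear relations. Everything else --- the syntactic properties of the gadget $f$, the vertex-cover/independent-set bijection, and the final arithmetic --- is immediate.
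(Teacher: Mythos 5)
Your proof is correct and follows essentially the same route as the paper: the same gadget $f=\sum_{uv\in E} Y_{uv}\prod_{w\notin\{u,v\}}X_w$ (the paper passes through an intermediate simplicial-complex face-counting problem but explicitly notes this shortcircuit), the same splitting of the span of derivatives into its $y$-degree-$0$ and $y$-degree-$1$ parts, and the same basis of monomials $x^S$ and polynomials $g_D$. Your linear-independence argument for the $g_D$ (reading off the coefficient of $y_j x^S$) is a slightly more direct version of the paper's, and your count $\dim\partial^* f = 2(2^n-\Ind(G))$ agrees with the paper's $\dim\partial^+ f = 2(2^n-\Ind(G)-1)$ once the order-$0$ and top-order derivatives are added back.
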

We proceed by reduction from the problem of 
counting the number of independent sets
in a graph, and use as an intermediate step a problem of topological
origin.
Recall that an (abstract) simplical complex is a family $\Delta$  of subsets
of a finite set $S$ such that for every $F$ in $\Delta$, all the
nonempty subsets of $F$ are also in $\Delta$. The elements of $\Delta$
are also called {\em faces} of the simplicial complex.
We denote by $|\Delta|$ the number of faces of $\Delta$, and more
generally by $|X|$ the cardinality of any finite set $X$.
The {\em dimension} of a face $X \in \Delta$ is $|X|-1$.
The dimension of $\Delta$ is the
maximal dimension of its faces. If every face of $\Delta$ belongs to 
a face of dimension $\dim(\Delta)$, the simplicial complex is said to
be {\em pure}.

The simplicial complex generated by a family $F_1,\ldots,F_m$ of
subsets of $S$ is the smallest simplicial complex containing all of
the $F_i$ as faces. This is simply the family of nomempty subsets 
$Y \subseteq S$ such that $Y \subseteq F_i$ for some $i$.
\begin{theorem} \label{hardcomplex}
The following problem is $\sharpp$-complete: given a family
$F_1,\ldots,F_m$ of subsets 
of $[n]=\{1,\ldots,n\}$, compute the number of faces of the simplical
complex $\Delta$ that it generates. 

This result remains true if $\Delta$ is pure, i.e., if
$F_1,\ldots,F_m$ have the same cardinality.
\end{theorem}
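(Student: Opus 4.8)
The plan is to prove Theorem~\ref{hardcomplex} in two stages. First I would establish membership in $\sharpp$. Given a family $F_1,\ldots,F_m$ of subsets of $[n]$, a nonempty set $Y\subseteq[n]$ is a face of the generated complex $\Delta$ iff $Y\subseteq F_i$ for some $i$. So the number of faces equals the number of nonempty $Y\subseteq[n]$ satisfying $\exists i\colon Y\subseteq F_i$. This is not literally of the form ``number of witnesses'' because of the existential quantifier, but it can be put in that form: a nondeterministic polynomial-time machine guesses $Y\subseteq[n]$ and the smallest index $i$ such that $Y\subseteq F_i$, accepting iff such an $i$ exists and is indeed minimal and $Y\neq\emptyset$. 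The number of accepting paths is then exactly $|\Delta|$, so the counting problem is in $\sharpp$. (Equivalently, apply inclusion--exclusion over the $2^m-1$ nonempty subsets $S\subseteq[m]$ to $|\{\,Y\neq\emptyset : Y\subseteq\bigcap_{j\in S}F_j\,\}| = 2^{|\bigcap_{j\in S}F_j|}-1$; but the witness description is cleaner for the $\sharpp$ upper bound.)

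**Hardness: reduction from counting independent sets.**

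For $\sharpp$-hardness I would reduce from the problem of counting independent sets in a graph $G=(V,E)$, which is $\sharpp$-complete~\cite{provan83}. The complex of independent sets of $G$ (the independence complex) is generated by the maximal independent sets, but we cannot list those efficiently; instead I would use a generating family that is polynomial in $|G|$. The natural choice: take the ground set to be $V$ and, for each \emph{edge} $e=\{u,v\}\in E$, the non-edge complement $F_e := V\setminus\{u,v\}$. A set $Y\subseteq V$ is contained in some $F_e$ iff there is an edge $e$ disjoint from $Y$. That is \emph{not} quite the independence condition, so this first attempt needs adjustment. The cleaner route, following the structure of~\cite{roune13}, is to reduce instead to computing the number of faces of the independence complex directly: a nonempty $Y\subseteq V$ is independent iff it contains no edge, and the number of independent sets is $(\text{number of faces of the independence complex}) + 1$. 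So it suffices to realize the independence complex (or a complex with a controllable number of faces derived from it) as the complex generated by a polynomial-size family. For this I would take the family of all maximal-by-one-vertex-removal sets, or more robustly, introduce for each vertex $v$ a ``cone'' construction so that the maximal faces become polynomially many; the key point is that one can pad the complex with a known number of extra faces to make it pure without destroying the count, which is exactly why the ``pure'' refinement in the statement comes essentially for free.

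**The main obstacle and the purity refinement.**

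The main obstacle, and the place where the idea from~\cite{roune13} is essential, is arranging that the generating family is polynomial in the size of $G$ while the generated complex still has a face count from which $\#\{\text{independent sets}\}$ can be recovered in polynomial time. The trick is to not insist on generating the independence complex itself but a closely related complex: add a large number of fresh ``dummy'' vertices and cone appropriately, so that (a) the maximal faces are polynomially many and explicitly listable, (b) all maximal faces have the same cardinality, giving purity, and (c) the total face count equals $\#\{\text{independent sets of }G\}$ plus an explicitly computable correction term $N(G)$ depending only on $n$, $|E|$ and the padding parameters. Subtracting $N(G)$ then recovers the independent-set count, so a polynomial-time oracle for the face-counting problem (even restricted to pure complexes) yields one for counting independent sets. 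Combined with the $\sharpp$ membership from the first paragraph, this gives $\sharpp$-completeness, and the construction being pure by design yields the final sentence of the theorem. I expect the bookkeeping of the correction term $N(G)$ and verifying purity to be the only delicate parts; the conceptual content is entirely in choosing the right padded complex.
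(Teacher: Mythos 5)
Your $\sharpp$ membership argument is fine (and more explicit than the paper's, which just declares membership immediate). The problem is in the hardness reduction: you consider exactly the right construction --- the family $F_e = V\setminus\{u,v\}$ for each edge $e=\{u,v\}$ --- and then abandon it with the remark that ``$Y$ is contained in some $F_e$ iff some edge is disjoint from $Y$, which is not quite the independence condition.'' That observation is true but is not an obstacle, and giving up here is the gap. The condition ``some edge is disjoint from $Y$'' says precisely that $V\setminus Y$ contains an edge, i.e., that $V\setminus Y$ is a \emph{dependent} set of $G$. So complementation $Y\mapsto V\setminus Y$ puts the nonempty faces of $\Delta$ in bijection with the dependent sets of $G$ other than $V$ itself (whose complement is empty), giving $|\Delta| = 2^n - \Ind(G) - 1$ whenever $E\neq\emptyset$. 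Recovering $\Ind(G)$ from $|\Delta|$ is then immediate, and counting independent sets is $\sharpp$-complete, so you are done. You do not need to realize the independence complex itself, only a complex whose face count determines $\Ind(G)$ by a trivially computable formula --- and the dependent-set complex (up to complementation) is exactly that. The purity refinement is also free in this construction: every generator $V\setminus\{u,v\}$ has cardinality $n-2$, so $\Delta$ is pure of dimension $n-2$.

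Everything after that point in your proposal --- the coning, the dummy vertices, the padding with a correction term $N(G)$ --- is a sketch of a construction you never actually specify, and you yourself flag the ``bookkeeping'' and the purity verification as unresolved. As written, the hardness direction is therefore not proved. The fix is not to invent a new padded complex but to finish the analysis of the construction you already had in hand: count complements of faces rather than faces, and note that counting dependent sets and counting independent sets are the same problem.
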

We deduce Theorem~\ref{hardpartials} from
Theorem~\ref{hardcomplex}. Let $\Delta$ be the pure simplicial complex
generated by a family $F_1,\ldots,F_m$ of subsets of $[n]$, with
$|F_i|=d$ for all~$i$.
We associate to each $F_i$ the monomial $m_i=\prod_{j \in F_i} X_j$,
and to $\Delta$ the polynomial
$f(X_1,\ldots,X_n,Y_1,\ldots,Y_m) = \sum_{i=1}^m
Y_i.m_i(X_1,\ldots,X_n)$. 
This is a multilinear homogeneous polynomial of degree $d+1$ in $m+n$ variables.
Theorem~\ref{hardpartials} is an immediate consequence of
Theorem~\ref{hardcomplex} and of the
following lemma.
\begin{lemma}
A basis of the linear space spanned by $\partial^+ f$ consists of
the following set of $2|\Delta|$ polynomials:
\begin{itemize}
\item[(i)] The $|\Delta|$ monomials of the form $\prod_{j \in F} X_j$,
  where $F$ is a face of $\Delta$.

\item[(ii)] The $|\Delta|$ polynomials of the form $\partial f
  / \partial F$, where $F$ is a face of $\Delta$ (we denote by 
$\partial f / \partial F$ the polynomial obtained from $f$ by
differentiating with respect to all variables $X_j$ with $j \in F$).
\end{itemize}
In particular, $\dim(\partial^+ f)=2|\Delta|$.
\end{lemma}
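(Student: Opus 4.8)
The plan is to show that the $2|\Delta|$ polynomials listed in (i) and (ii) are (a) linearly independent, (b) contained in $\partial^+ f$, and (c) spanning. Since $f = \sum_{i=1}^m Y_i \, m_i(X)$ is multilinear and homogeneous of degree $d+1$, a partial derivative of order $r$ (with $1 \le r \le d$) is obtained by choosing for each monomial $Y_i m_i$ a subset of $r$ variables to differentiate. Differentiating with respect to a set that includes two of the $Y$'s kills the monomial; differentiating with respect to a set that includes $Y_i$ and some $X$-variables not all lying in $F_i$ also kills the $Y_i m_i$ term. So the only surviving contributions come from differentiating either (A) purely within the $X$-variables, by a set $G$ with $G \subseteq F_i$, giving the monomial $Y_i \prod_{j \in F_i \setminus G} X_j$, or (B) with respect to exactly one $Y_i$ together with a subset $G \subseteq F_i$ of the $X$-variables, giving the monomial $\prod_{j \in F_i \setminus G} X_j$. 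The key observation is that $G \subseteq F_i$ for some $i$ is exactly the condition ``$G$ is a face of $\Delta$ (or empty),'' and likewise $F_i \setminus G$ ranges over faces of $\Delta$.

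First I would handle the spanning direction (c), which I expect to be routine once the structure above is in place. Given any tuple $\beta$ with $1 \le |\beta| \le d$, I will argue that $\partial_\beta f$ is a linear combination of the polynomials in (i) and (ii). If $\beta$ is supported only on $X$-variables, say $\beta = G$ with $\emptyset \ne G \in \{0,1\}^n$, then $\partial_G f = \sum_{i \,:\, G \subseteq F_i} Y_i \prod_{j \in F_i \setminus G} X_j$; if some $F_i$ contains $G$ then $G$ is a face and $\partial_G f = \partial f/\partial G$ is exactly one of the polynomials in (ii) (note $\partial f/\partial F$ for a maximal face $F$ with $|F|=d$ is $Y_{i}$ plus possibly other $Y$'s, still of the stated form), while if no $F_i$ contains $G$ then $\partial_G f = 0$. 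If $\beta$ involves exactly one $Y_i$ and a set $G$ of $X$-variables with $G \subseteq F_i$, then $\partial_\beta f = \prod_{j \in F_i \setminus G} X_j$, a monomial indexed by the face $F_i \setminus G$, hence in (i); if $G \not\subseteq F_i$ the derivative is zero; and if $\beta$ involves two or more $Y$'s the derivative is zero. This exhausts all cases, so (i)$\cup$(ii) spans $\partial^+ f$.

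Next, containment (b): each monomial $\prod_{j \in F} X_j$ with $F$ a face arises as $\partial_\beta f$ with $\beta = Y_i + (F_i \setminus F)$ for any $i$ with $F \subseteq F_i$, and $1 \le |\beta| = 1 + (d - |F|) \le d$ provided $F \ne \emptyset$ (which holds since faces are nonempty); each $\partial f/\partial F$ is $\partial_F f$ with $1 \le |F| \le d$. The main obstacle is the linear independence claim (a). The monomials in (i) are distinct monomials in the $X$-variables only, hence linearly independent among themselves. The polynomials in (ii), $\partial f/\partial F = \sum_{i : F \subseteq F_i} Y_i \prod_{j \in F_i \setminus F} X_j$, each contain at least one $Y$-variable, so no nontrivial combination of them can equal a combination of the $X$-monomials in (i); it remains to see the polynomials in (ii) are themselves independent. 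For this I would pick, for each face $F$, the ``diagonal'' monomial $Y_{i(F)} \prod_{j \in F_{i(F)} \setminus F} X_j$ where $i(F)$ is some index with $F \subseteq F_{i(F)}$, and observe that in $\partial f/\partial F$ the $Y$-part of any monomial is a single $Y_i$ while the $X$-part is $F_i \setminus F$; reading off the pair $(i, F_i\setminus F)$ recovers $F$ as $F_i \setminus (X\text{-part})$. More cleanly: order faces $F$ by decreasing cardinality and note that $\partial f/\partial F$ contains a monomial (any surviving $Y_i \prod_{j \in F_i\setminus F}X_j$) whose $X$-degree is $d - |F|$, which is minimal among the $X$-degrees appearing, so a triangularity argument with respect to $|F|$ gives independence — any dependence would force the top (largest $|F|$, i.e.\ smallest $X$-degree) term to cancel, but that term's monomials $Y_i \prod_{j\in F_i\setminus F} X_j$ do not appear in any $\partial f/\partial F'$ with $|F'| > |F|$. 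Once (a), (b), (c) are established, the dimension count $\dim \partial^+ f = 2|\Delta|$ is immediate, and combined with $\dim \partial^* f = \dim \partial^+ f + 2$ for homogeneous $f$ this finishes the reduction.
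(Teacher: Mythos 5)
Your proposal is correct and follows essentially the same route as the paper: classify the nonzero partial derivatives according to whether exactly one $Y_i$ or no $Y_i$ is differentiated, separate (i) from (ii) via the $Y$-grading of the ambient space, and get independence within (ii) from the fact that the monomials $Y_i\prod_{j\in F_i\setminus F}X_j$ determine $F$. Your ``disjoint supports'' phrasing of that last step is a mild (and if anything cleaner) variant of the paper's argument, which instead fixes a maximal face $F_1\supseteq G_1$ and extracts the coefficient of $Y_1$.
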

\begin{proof}
We first note that a polynomial in (ii) belongs to $\partial^+ f$ by
definition. A polynomial in (i) also belongs to $\partial^+ f$ since
it can be obtained by picking a maximal face $F_i$ containing $F$,
differentiating with respect to $Y_i$, and then with respect to all
variables $X_j$ where $j \in F_i \setminus F$.

Conversely, any partial derivative which is not identically 0 is of
the form~(i) if we have differentiated $f$ with respect to exactly one
$Y_i$, or of the form~(ii) if we have not differentiated $f$ with
respect to any of the variables $Y_i$. It therefore remains to show
that the polynomials in our purported basis are linearly independent.

The monomials  in (i) are linearly independent since they are pairwise
distinct. To show that the polynomials in (ii) are linearly
independent, consider a linear combination 
$$g=\sum_{j=1}^{|\Delta|} \alpha_j \frac{\partial f}{\partial G_j},$$
where $G_1,\ldots,G_{|\Delta|}$ are the faces of $\Delta$.
By construction of $f$, 
\begin{equation} \label{lincomb}
g=\sum_{i=1}^m Y_i.\left( \sum_{j=1}^{|\Delta|} \alpha_j 
\frac{\partial m_i}{\partial G_j}\right).
\end{equation}
Assume that some coefficient $\alpha_j$, for instance $\alpha_1$, is
different from 0. The face $G_1$ belongs to some maximal face of
$\Delta$, for instance to $F_1$.
We claim that 
$\sum_{j=1}^{|\Delta|} \alpha_j \frac{\partial m_1}{\partial G_j} \neq
0$.
Indeed, the faces $G_j$ which are not included in $F_1$ contribute
nothing to this sum, and the faces that are included in $F_1$
contribute pairwise distinct monomials. It follows
from~(\ref{lincomb}) that $g \neq 0$, and that the polynomials 
in (ii) are indeed linearly independent.

To complete the proof of the lemma, it remains to note that the spaces
spanned by (i) and (ii) are in direct sum. Indeed, the first space is
included in $\qq[X_1,\ldots,X_n]$ while the second is included in 
$\sum_{i=1}^m Y_i\qq[X_1,\ldots,X_n]$.
\end{proof}

\subsection{Proof of Theorem~\ref{hardcomplex}}

Let $G=(V,E)$ be a graph with vertex set $V=[n]$ and $m=|E|$ edges. 
We associate to $G$ the
simplicial complex $\Delta$ generated by the complements of the edges
of $G$, i.e., by the sets $V \setminus \{u,v\}$ where $uv \in E$.
This is a pure simplicial complex of dimension $n-2$.
The faces of $\Delta$ are the complements of the dependent sets of $G$, except the empty set (the complement of $V$) which is not a face of $\Delta$ by convention.
Hence $|\Delta|=2^n - \Ind(G) - 1$, where $\Ind(G)$ denotes the number of
independents sets in $G$. Computing the number of independent sets of
a graph is a well-known $\sharpp$-complete problem. 
It was shown to be  $\sharpp$-complete even 
for bipartite graphs~\cite{provan83},
for planar bipartite graphs of degree at most four~\cite{vadhan01} and for 3-regular graphs~\cite{greenhill00}.
It follows that
computing $|\Delta|$ is $\sharpp$-hard, and membership in $\sharpp$ is
immediate from the definition. 
This completes the proof of Theorem~\ref{hardcomplex}.

We note that it is easy to shortcircuit Theorem~\ref{hardcomplex} and
construct the polynomial $f$ in the proof of
Theorem~\ref{hardpartials} directly from $G$: 
we have $$f=\sum_{uv \in E} Y_{uv}.\prod_{w {\not \in} \{u,v\}} X_w$$
and $\dim \partial^+ f = 2(2^n-\Ind(G)-1)$.
Since the maximal faces of $\Delta$ 
have $n-2$ elements, we have
the following refinement of Theorem~\ref{hardpartials}.
\begin{corollary} 
It is $\sharpp$-hard to compute $\dim \partial^* f$ for a multilinear
homogenous polynomial $f$ of degree $n-1$ with coefficients in $\{0,1\}$,
$m$ monomials and $n+m$ variables with $m \leq {n \choose 2}$.
\end{corollary}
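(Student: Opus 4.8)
The plan is simply to record that the polynomial exhibited just above the statement meets all the listed constraints, and that it realizes a polynomial-time reduction from counting independent sets; the substantive work has already been done in Theorem~\ref{hardcomplex} and the Lemma used to deduce Theorem~\ref{hardpartials} from it. First I would start from an arbitrary simple graph $G=(V,E)$ with $V=[n]$ and $m=|E|$ edges; since $G$ is simple we automatically have $m \leq {n \choose 2}$. To $G$ I associate
$$f=\sum_{uv \in E} Y_{uv}.\prod_{w \notin \{u,v\}} X_w,$$
and I would check directly that $f$ is multilinear (each variable occurs at most once in each monomial), homogeneous of degree $n-1$ (one $Y$-variable times $n-2$ $X$-variables), has exactly $m$ monomials, has all coefficients equal to $1$, and uses exactly the $n+m$ variables $X_1,\ldots,X_n$ and $Y_{uv}$ with $uv \in E$. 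So $f$ is a legitimate instance of the promised type, and $G \mapsto f$ is computable in polynomial time.

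Next I would identify $f$ with the construction in the proof of Theorem~\ref{hardpartials}: it is obtained from the pure simplicial complex $\Delta$ generated by the family $\{V \setminus \{u,v\} : uv \in E\}$, all of whose members have cardinality $d=n-2$, by setting $m_i=\prod_{w \in V\setminus\{u,v\}} X_w$ and scaling by the $Y_{uv}$. The Lemma above therefore applies verbatim and gives $\dim \partial^+ f = 2|\Delta|$. As computed in the proof of Theorem~\ref{hardcomplex}, the faces of $\Delta$ are exactly the nonempty complements of dependent sets of $G$, so $|\Delta| = 2^n - \Ind(G) - 1$; and since $f$ is homogeneous (and $n \geq 3$, which we may assume), $\dim \partial^* f = \dim \partial^+ f + 2 = 2(2^n-\Ind(G)-1)+2 = 2^{n+1}-2\,\Ind(G)$. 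Hence $\Ind(G)=2^n - \tfrac12 \dim \partial^* f$ is recovered from $\dim \partial^* f$ in polynomial time.

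Finally I would conclude: an algorithm computing $\dim \partial^* f$ on multilinear homogeneous $0/1$-polynomials of degree $n-1$ with $n+m$ variables and $m \leq {n \choose 2}$ monomials would, via the reduction above, solve the $\sharpp$-hard problem of counting independent sets in $G$ (already hard for bipartite, for planar bipartite of degree at most four, and for $3$-regular graphs), whence the stated $\sharpp$-hardness. There is no real obstacle here: the entire content is the bookkeeping that the construction respects the claimed parameter bounds and that the earlier Lemma and the identity $\dim \partial^* f = \dim \partial^+ f + 2$ apply to it unchanged.
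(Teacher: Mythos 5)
Your proposal is correct and follows exactly the route the paper takes: it instantiates the reduction $f=\sum_{uv \in E} Y_{uv}\prod_{w \notin \{u,v\}} X_w$ from the remark preceding the corollary, verifies the parameter bookkeeping, and recovers $\Ind(G)$ from $\dim \partial^* f = 2(2^n-\Ind(G)-1)+2$ via the Lemma and Theorem~\ref{hardcomplex}. Nothing further is needed.
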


\small
\bibliographystyle{plain}

\end{document}